\numberwithin{equation}{section}
\newtheorem{theorem}{Theorem}[section]
\newtheorem{proposition}[theorem]{Proposition}
\newtheorem{corollary}[theorem]{Corollary}
\newtheorem{lemma}[theorem]{Lemma}
\theoremstyle{definition}
\newcommand{\CC}{\mathbb{C}} 
\newcommand{\RR}{\mathbb{R}} 
\newcommand{\FF}{\mathbb{F}} 
\newcommand{\ZZ}{\mathbb{Z}} 
\newcommand{\NN}{\mathbb{N}} 
\DeclareMathOperator{\im}{Im} 
\DeclareMathOperator{\diag}{diag} 
\DeclareMathOperator{\Sp}{Sp} 
\newcommand{\smsum}{\mathop{\textstyle{\sum}}\limits} 
\newcommand{\B}{\mathcal{B}} 
\newcommand{\R}{\mathcal{R}} 
\newcommand{\J}{\mathcal{J}} 
\newcommand{\II}{\mathbb{I}} 
\newcommand{\Sieg}{\mathfrak{H}} 
\newcommand{\tp}{\,{}^t\!} 
\newcommand{\be}{\begin{equation}}
\newcommand{\ee}{\end{equation}}
\newcommand{\g}{\mathfrak{g}}
\newlength{\oldcolsep}\setlength{\oldcolsep}{\arraycolsep}
\newcommand{\spin}[2]{\addtolength{\arraycolsep}{-3pt}%
{\Bigl[\,\begin{matrix}#1\\[-2pt]#2\end{matrix}\,\Bigr]}\addtolength{\arraycolsep}{3pt}}
\newcommand{\thspin}[2]{\addtolength{\arraycolsep}{-3pt}%
{\left[\!\begin{smallmatrix}#1\\#2\end{smallmatrix}\!\right]}\addtolength{\arraycolsep}{3pt}}
\begin{document}

\title{Getting superstring amplitudes by degenerating Riemann surfaces}
\author{Marco Matone$^1$ and Roberto Volpato$^2$}\date{}

\maketitle

\begin{center}$^1$Dipartimento di Fisica ``G. Galilei'' and Istituto
Nazionale di Fisica Nucleare \\
Universit\`a di Padova, Via Marzolo, 8 -- 35131 Padova,
Italy\end{center}

\medskip

\begin{center}{$^2$Institut f\"ur Theoretische Physik,
ETH Z\"urich} \vspace*{0.1cm} \\
{8093 Z\"urich, Switzerland}\end{center}

\bigskip

\begin{abstract} We explicitly show how
the chiral superstring amplitudes can be obtained through
factorisation of the higher genus chiral measure induced by suitable
degenerations of Riemann surfaces. This powerful tool
also allows to derive, at any genera, consistency relations
involving the amplitudes and the measure.  A key point concerns the
choice of the local coordinate at the node on degenerate Riemann
surfaces that greatly simplifies the computations. As a first
application, starting from recent ans\"atze for the chiral measure
up to genus five, we compute the chiral two-point function for
massless Neveu-Schwarz states at genus two, three and four. For
genus higher than three, these computations include some new
corrections to the conjectural formulae appeared so far in the
literature. After GSO projection, the two-point function vanishes at
genus two and three, as expected from space-time supersymmetry
arguments, but not at genus four. This suggests that the ansatz for
the superstring measure should be corrected for genus higher than
four.

\end{abstract}

\newpage

\section{Introduction}

In the last years there has been a considerable progress in
understanding and deriving explicit formulas for
multiloop superstring amplitudes. In a series of papers
\cite{D'Hoker:2001zp,D'Hoker:2004xh,D'Hoker:2004ce,D'Hoker:2005jc,D'Hoker:2005ia,D'Hoker:2007ui},
D'Hoker and Phong obtained, from first principles, an explicitly
gauge independent expression for the $2$-loop chiral superstring
measure on the moduli space of Riemann surfaces, given by
\begin{equation}\label{twoloopmeas}d\mu^{(2)}[\delta]=\Xi^{(2)}[\delta]d\mu^{(2)}_{Bos}\ ,
\end{equation}
where $\delta\in\ZZ_2^2$ is an even spin structure,
$\Xi^{(2)}[\delta]$ is a modular form of weight $8$ for a subgroup
of $\Sp(4,\ZZ)$ and $d\mu^{(g)}_{Bos}$ is the genus $g$ bosonic
string measure. Based on such a result,  they also proved the
non-renor\-ma\-lisa\-tion of the cosmological constant and of the
$n$-point functions, $n\le 3$, up to $g=2$, as expected by
space-time supersymmetry arguments \cite{Martinec:1986wa}.
Furthermore, the four-point amplitude has been computed and checked
against the constraints coming from S-duality \cite{D'Hoker:2005ht}.

Direct computations of higher loop corrections to superstring
amplitudes have been intensively investigated during the years.
In spite of such efforts, direct
computations still appear out of reach. Nevertheless, the strong
constraints coming from modular invariance and from factorisation
under degeneration limits, together with the explicit $2$-loop
expressions, can lead to reliable conjectures on such corrections.
This is the point of view adopted, for example, in
\cite{Matone:2005vm}, where the explicit expression of higher loop
contributions to the four-point function has been proposed.

In \cite{D'Hoker:2004xh,D'Hoker:2004ce}, D'Hoker and Phong
conjectured that eq.\eqref{twoloopmeas} could be extended to genus
$g>2$ for a suitable modular form $\Xi^{(g)}[\delta]$ of weight $8$.
Such a form is required to fulfill a set of constraints related to
holomorphicity, modular invariance and factorisation. In
\cite{Cacciatori:2008ay} Cacciatori, Dalla Piazza and van Geemen
(CDvG) found a solution to such constraints at $g=3$ and in
\cite{DallaPiazza:2008} the uniqueness of this solution has been
proved.

The CDvG ansatz for the $g=3$ measure has been generalised to any $g$
by Grushevsky \cite{Grushevsky:2008zm}. Salvati Manni proved in
\cite{SalvatiManni:2008qa} that such an ansatz provides a solution to
the constraints for $g=4,5$. For $g>5$ some issues arise due to
the presence of holomorphic roots of modular forms in the definition
of the chiral measure, and it is not clear whether such roots are
well defined and have the correct modular properties. In
\cite{oura-2008}, Oura, Poor, Salvati Manni and Yuen (OPSMY)
proposed an alternative construction for the chiral measure up to
$g=5$, using lattice theta series rather than theta constants, as done by Grushevsky.
They also proved that the solution to the constraints is unique up
to $g=4$. The explicit equivalence of all ans\"atze up to $g=4$ has
been shown in \cite{DuninBarkowski:2009ej}. It is still an open
question to understand whether Grushevsky and OPSMY proposals
coincide at $g=5$.

\medskip

There are several consistency conditions that the chiral superstring measure must satisfy. In particular, non-renormalisation theorems from space-time
supersymmetry  imply that the cosmological constant and the $n$-point functions, for $n\le 3$, must vanish  \cite{Martinec:1986wa}.
The vanishing of the $g$-loop correction to the cosmological constant corresponds to the condition
$$\sum_{\delta\text{ even}}\Xi^{(g)}[\delta]=0\ ,
$$ where the sum over the spin structures corresponds to the GSO projection \cite{Gliozzi:1976qd}.
This identity has been proved for the CDvG-Grushevsky
(CDvG-G) ansatz for genera $3$ \cite{Cacciatori:2008ay} and $4$
\cite{Grushevsky:2008zm}. Remarkably, for $g=4$, the cosmological constant
corresponds to a non-zero Siegel modular form of weight $8$ (the
Igusa-Schottky form), vanishing only on the locus of Jacobians of
Riemann surfaces. For $g=5$ the vanishing of the cosmological constant has to be imposed
as a further constraint on the chiral measure and it is satisfied by the OPSMY ansatz and by a slight modification of the original Grushevsky's ansatz \cite{Grushevsky:2008zp}.
It would be interesting to understand whether this further condition implies the uniqueness of the solution also in the case $g=5$.

\medskip

Consistency conditions related to non-renormalisation of the chiral
amplitudes for $n=1,2,3$ Neveu-Schwarz massless states are much more
difficult to check. As the two-loop explicit computation shows,
these amplitudes are given by a sum of several different
contributions that cannot be easily determined in terms of the
chiral measure alone. Very schematically, all such contributions can
be collected into two different terms that, following
\cite{D'Hoker:2007ui}, we call the connected ($\B_c[\delta]$) and
disconnected ($\B_d[\delta]$) part (see also \cite{Lechtenfeld:1989ke} for a relevant preliminary investigation of such contributions).
The disconnected part can be
easily expressed in terms of the chiral measure. In particular, the
disconnected part of one-point function vanishes trivially after
summing over the spin structures. For $n=2,3$, $\B_d[\delta]$ is
given, up to spin-independent factors, by the functions
\begin{align}
\hat A_2[\delta](a,b)&:= \Xi^{(g)}[\delta]S_\delta(a,b)^2\ ,\label{faketwodelta}\\
\hat A_3[\delta](a,b,c)&:= \Xi^{(g)}[\delta]S_\delta(a,b)S_\delta(b,c)S_\delta(c,a)
\ ,\label{fakethreedelta}
\end{align}
where $a,b,c$ are the insertion points and $S_\delta$ is the Szeg\"o
kernel \cite{Fay:1973}. On the other hand,
$\B_c[\delta]$ is much more complicated to compute and its precise
form is unknown for $g>2$. One possible approach to this problem is
to introduce some simplifying assumptions. In this respect, it is
useful to analyse the explicit two-loop computation of the two- and
three-point functions. In these cases, the connected and
disconnected contributions vanish separately after the GSO
projection \cite{D'Hoker:2005jc}. It is reasonable to conjecture
that a similar mechanism occurs at higher genus as well, so that
$$\sum_{\delta\text{ even}}\B_c[\delta]=0\ ,$$
and the non-renormalisation theorems would imply that also $\sum_{\delta\text{ even}}\B_d[\delta]$ vanishes, i.e.
\begin{align}\hat A_2(a,b)&:=\sum_{\delta\text{ even}}\hat A_2[\delta](a,b)=0\ ,\label{twopvanish}\\
\hat A_3(a,b,c)&:=\sum_{\delta\text{ even}}\hat A_3[\delta](a,b,c)=0\ ,\label{threepvanish}
\end{align}
for all insertion points $a,b,c$.  A strong argument for the identities
\eqref{twopvanish} and \eqref{threepvanish} to hold on the
hyperelliptic locus for any genus has been given by Morozov in
\cite{Morozov:2008xd}, whereas
Grushevsky and Salvati Manni proved \eqref{twopvanish} for genus
$3$ \cite{Grushevsky:2008qp}. However, in \cite{Matone:2008td} it has been proved that
\eqref{threepvanish} does not hold for any non-hyperelliptic Riemann
surface of genus $3$. More precisely, $\hat A_3(a,b,c)=0$ for all
$a,b,c\in C$, where $C$ is a Riemann surface of genus $3$, if and
only if $C$ is hyperelliptic. In this paper, we will also prove that \eqref{twopvanish} and \eqref{threepvanish} do not hold at genus four (see subsection \ref{s:grus}).
Apparently, these results lead to a
contradiction between the chiral measure ansatz at three loop and
non-renormalisation theorems. However, as discussed in
\cite{Matone:2008td}, it is plausible to consider this discrepancy
as the evidence that the connected part of the chiral amplitude does
not vanish in these cases.

\medskip

In this paper, we propose a different approach to the computation of
the (spin dependent part of the) chiral amplitude for two NS
massless states at $g$-loop, for $g=2,3,4$, based on the natural
factorisation properties of the chiral measure. More precisely, the
two-point function can be obtained by considering the chiral measure
at genus $g+1$ in the limit in which one of the handles of the
Riemann surface becomes infinitely long. We apply this procedure to
the OPSMY ansatz for the chiral measure and show that the two-point
function is given by \eqref{faketwodelta} plus a correction term.
For $g=2,3$ such a term vanishes after summing over the spin
structures, so that the complete two-point function vanishes as
expected by space-time supersymmetry. This represents a highly
non-trivial consistency check for the chiral measure at genus
$g+1=3,4$. On the other hand, the two-point function does not vanish
at genus $4$, which could be the signal that the OPSMY ansatz must
be corrected at $g=5$.

\medskip

The paper is organised as follows. In section two, after reviewing Grushevsky ansatz, we formulate a lemma
and proposition based on theta relations, that imply the non vanishing of the proposed two-point function at genus four. This also
easily reproduces the known results in the case of genus lower than four. Another simple consequence is that the proposed three-point amplitude
does not vanish at genus four as requested by the non-renormalisation theorem. We conclude this section by considering the OPSMY ansatz for
the superstring measure in terms of theta lattices \cite{oura-2008}.

In section three we consider the degeneration of handles of Riemann
surfaces, to provide basic relationships among measure and
amplitudes at arbitrary genera. A key point is the choice of a local coordinate at the node of the degenerate
Riemann surfaces that greatly simplifies the computations. As an application, we explicitly
show that the two-point function corresponds to the leading term in
the degeneration parameter. It turns out that the proposed
superstring measure actually leads to a vanishing two-point
function for genus two and three. In this respect, it should be
stressed that while the results in section two are obtained assuming the form \eqref{faketwodelta} and \eqref{fakethreedelta} for the $n$-point functions, here the results are
obtained using only the ansatz for
the chiral measure, so that this investigation also provides an important check for the
ansatz itself at genus three and four. We also directly show that
the two-point function at $g=4$, implied by the OPSMY ansatz for the
measure, does not vanish as requested by the non-renormalisation
theorem. In turn, this also implies that the proposed three-point
function does not vanish at the same genus. Section four is devoted
to our conclusions.

In appendix A we first fix some notation used in the main text and
recall basic facts on Riemann surfaces and Riemann theta functions.
Next, we provide a careful analysis of the degeneration of Riemann surfaces which is used in section three to derive the two-point function. We also
provide a basic formula for a section of $|2\Theta|$, with
$\Theta$ denoting the theta divisor. In appendix B, after reviewing
useful results on unimodular lattices and the associated theta
series, we consider the summation on the spin structures. In this
context, we obtain some results that, at the best of our knowledge,
are new.

\section{The chiral superstring measure}

The chiral superstring measure $d\mu^{(g)}[\delta]$ satisfies some
natural consistency conditions coming from modular invariance and
factorisation properties \cite{D'Hoker:2004ce,Cacciatori:2008ay}.
Such conditions impose strong constraints on the modular form
$\Xi^{(g)}[\delta](\Omega)$ defined in \eqref{twoloopmeas}, which,
at least for low genera, are sufficient to uniquely characterise
this form. It is easier to first describe the constraints satisfied
by $\Xi^{(g)}[0]$:
\begin{enumerate}
\item $\Xi^{(g)}[0](\Omega)$ is a modular form of weight $8$
under $\Gamma_g(1,2)\subset \Gamma_g=\Sp(2g,\ZZ)$
\be\label{ximod}
\Xi^{(g)}[0]((A\Omega+B)(C\Omega+D)^{-1})=\det(C\Omega+D)^8\Xi^{(g)}[0](\Omega)\
, \ee
$\left(\begin{smallmatrix} A & B\\ C & D
\end{smallmatrix}\right)\in\Gamma_g(1,2)$ (see appendix \ref{s:mathback} for more details on modular forms).

\item In the limit
$$ \Omega_g\to \begin{pmatrix}
\Omega_k & 0\\ 0 & \Omega_{g-k}
\end{pmatrix}\ , $$
where $\Omega_k\in\Sieg_k$, $\Omega_{g-k}\in\Sieg_{g-k}$,
$\Xi^{(g)}[0](\Omega_g)$ must factorise
\be\label{xifact} \Xi^{(g)}[0](\Omega_g)\to
\Xi^{(k)}[0](\Omega_k)\Xi^{(g-k)}[0](\Omega_{g-k})\ .\ee
\item For $g=1$, the known result for the chiral measure must be
reproduced, so that \be\label{xinorm}
\Xi^{(1)}[0](\tau)=\theta[0](\tau)^4\prod_{\delta\text{
even}}\theta[\delta](\tau)^4\ ,\ee with $\tau\in\Sieg_1$.
\end{enumerate}
Once these properties are satisfied for a certain $\Xi^{(g)}[0]$,
then, for any other even spin structure $\delta$ we can define
\be\label{xigendelta}
\Xi^{(g)}[\delta](\Omega):=\det(C\Omega+D)^{-8}\Xi^{(g)}[0]((A\Omega+B)(C\Omega+D)^{-1})\
, \ee where $\left(\begin{smallmatrix} A & B\\ C & D
\end{smallmatrix}\right)\in \Gamma_g$ satisfies (see eqs.\eqref{modulcharac} and \eqref{emmedelta})
\be\label{spintransf} \spin{\delta'}{\delta''}=\Bigl[0\cdot
\begin{pmatrix}
A & B\\ C & D
\end{pmatrix}\Bigr]=
\begin{bmatrix}
(\tp AC)_0\\(\tp BD)_0
\end{bmatrix}\ ,\ee (for any matrix $A$, we denote by $A_0$ the vector of diagonal entries). With this definition, each $\Xi^{(g)}[\delta]$ can be shown
to satisfy all the constraints from modular invariance and
factorisation, as an immediate consequence of \eqref{ximod},
\eqref{xifact} and \eqref{xinorm}.

Space-time supersymmetry implies that the cosmological constant must
vanish after the GSO projection. In terms of the chiral measure,
this condition becomes
\begin{enumerate}
\item[4.] Vanishing of the cosmological constant \be\label{xicosm} \sum_{\delta\text{ even}}\Xi^{(g)}[\delta](\Omega)=0\
,\qquad\qquad \Omega\in\J_g\subseteq \Sieg_g\ ,\ee where $\J_g$ is the
locus of the period matrices of Riemann surfaces of genus $g$.
\end{enumerate}
For $g\le 4$, this
last condition is a consequence of \eqref{ximod}, \eqref{xifact} and \eqref{xinorm}, while at genus $5$ it
must be imposed as an independent constraint. The solution of the above conditions in the case of hyperelliptic Riemann surfaces has
been found by Poor and Yuen \cite{PoorYuen}.

\subsection{Grushevsky ansatz}\label{s:grus}

In \cite{Grushevsky:2008zm} an ansatz has been proposed for the
chiral superstring measure which satisfies the conditions
\eqref{ximod}, \eqref{xifact} and \eqref{xinorm}. At genus $5$, a
modified version of this ansatz is needed to satisfy also
\eqref{xicosm} \cite{Grushevsky:2008zp}. In this section, we
describe Grushevsky's construction and prove that the functions
$\hat A_2$ and $\hat A_3$ defined in \eqref{twopvanish} and
\eqref{threepvanish} do not vanish at $g=4$.

Let $V$ be a vector subspace of $\FF_2^{2g}$, with $\FF_2:=\{0,1\}$
the field of characteristic $2$. Set
$$P(V):=\prod_{\delta'\in V}\theta[\delta']\ ,\qquad P^g_{i,s}:=\sum_{V,\ \dim V=i}P(V)^s\ .$$
For each $\delta\in\FF_2^{2g}$, consider the affine
space $A:=\delta+V$ and define
$$P(A)\equiv P(V+\delta):=\prod_{\delta'\in V}\theta[\delta'+\delta]\ ,$$
and
$$P^g_{i,s}[\delta]:=\sum_{V,\ \dim V=i}P(V+\delta)^s=
\sum_{A\ni\delta,\ \dim A=i,}P(A)^s\ .
$$
Grushevsky proposal for the modular form $\Xi^{(g)}[\delta]$
appearing in the superstring measure
$d\mu[\delta]=\Xi^{(g)}[\delta]d\mu_{Bos}$ is
$$\Xi^{(g)}_{G}[\delta]:=2^{-g}\sum_{i=0}^g(-1)^i2^{\frac{i(i-1)}{2}}P^g_{i,2^{4-i}}[\delta]\ .$$

\bigskip

\noindent The cosmological constant is
$$\Xi_G^{(g)}:=\sum_{\delta\text{ even}}\Xi_G^{(g)}[\delta]
=2^{-g}\sum_{i=0}^g(-1)^i2^{\frac{i(i+1)}{2}}\sum_{A,\ \dim
A=i}P(A)^{2^{4-i}}=2^{-g}\sum_{i=0}^g(-1)^i2^{\frac{i(i+1)}{2}}S_{i,2^{4-i}}\ , $$ where
$$S_{i,s}:=\sum_{A,\ \dim
A=i}P(A)^{s}\ .
$$
Note the factor $2^{\frac{i(i+1)}{2}}$ which differs from
$2^{\frac{i(i-1)}{2}}$ in the definition of $\Xi^{(g)}[\delta]$,
because in the cosmological constant each affine space $A$ of
dimension $i$ is counted $2^i$ times, one per each element
$\delta\in A$. The cosmological constant up to $g=5$ can be computed
using the following relations for the modular forms $S_{i,2^{4-i}}$
\cite{Igusa:1981}
\begin{align*}
(2^{2g}-1)S_{0,16}&=6S_{1,8}+24S_{2,4}\ , & g\ge 2\\
(2^{2g-2}-1)S_{1,8}&=18S_{2,4}+168S_{3,2}\ , & g\ge 3\\
(2^{2g-4}-1)S_{2,4}&=42S_{3,2}+840S_{4,1}\ , & g\ge 4
\end{align*}
together with the following relation which holds on
$\J_g\subseteq\Sieg_g$ for $g\ge 5$ \cite{Grushevsky:2008zp}
\begin{align*} (2^{2g-6}-1)S_{3,2}&=90S_{4,1}+3720S_{5,1/2}\ , &g\ge 5\ .
\end{align*} It follows that
$$\Xi^{(g)}_G=2^{g-1}(2^g+1)D_gJ^{(g)}\ ,
$$
for some non-vanishing $D_g\in\CC$, where $2^{g-1}(2^g+1)$ is the
number of even spin structures at genus $g$, \be\label{jayg}
J^{(g)}:=\Theta_{E_8}^2-\Theta_{D_{16}^+}=2^{-2g}((1-2^g)S_{0,16}+2S_{1,8})\
, \ee and $\Theta_{E_8}$, $\Theta_{D_{16}^+}$ are the theta series
corresponding to the even unimodular lattices $\Lambda=E_8$ and
$\Lambda=D_{16}^+$ (see subsection \ref{s:OPSMY}). In particular,
$\Xi^{(g)}_G=0$ for $g=2,3$, because $J^{(g)}$ vanishes identically
on $\Sieg_g$ for $g\le 3$, while \be\label{Dfour}
D_4=-\frac{2^{7}\cdot 3}{7\cdot 17}\ , \ee and 
$$D_5=-\frac{2^{11}\cdot 17}{7\cdot 11\cdot 31}\ . $$
For $g=4$ the form $J^{(4)}$ vanishes identically on the locus
$\J_4$ (in fact, $\J_4$ is the divisor of $J^{(4)}$ inside
$\Sieg_4$), while $J^{(5)}\neq 0$ on $\J_5$
\cite{Grushevsky:2008zp}. Thus, for the constraint \eqref{xicosm} to
be satisfied, one has to introduce a modified measure at $g=5$
\be\label{corrGru} \tilde\Xi_G^{(5)}[0]:=\Xi_G^{(5)}[0]-D_5J^{(5)}\
, \ee that continues to satisfy the factorisation properties and
assures that eq.\eqref{xicosm} is satisfied.

\bigskip

Let $C$ be a Riemann surface of genus $g$ and define
\be\label{faketwo} \hat A_2(a,b):=\sum_{\delta\text{
even}}\Xi^{(g)}[\delta]S_\delta(a,b)^2\ , \ee $a,b\in C$, where
$S_\delta(a,b)$ is the Szeg\"o kernel (see appendix
\ref{appendiceuno}). It has been proposed that the chiral two-point
function for NS states on $C$ corresponds to $\hat A_2(a,b)$ up to
spin independent factors. By space-time supersymmetry, the two-point
function is expected to vanish identically on any Riemann surface.
It has been proved \cite{Grushevsky:2008qp} that with Grushevsky
ansatz this condition on \eqref{faketwo} is satisfied for $g\le 3$.
In the following, we will prove that such a condition does not hold
for $g=4$. This is an immediate consequence of the following useful
lemma.

\begin{lemma}
\be\label{sumdelta} \frac{\partial\Xi^{(g)}_G}{\partial\Omega_{ij}}=\frac{2^4}{2\pi
i(1+\delta_{ij})}\sum_{\delta}\Xi^{(g)}_G[\delta]\,\partial_i\partial_j\log\theta[\delta]
\ .\ee
\end{lemma}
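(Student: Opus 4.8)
The plan is to reduce the identity to a purely combinatorial statement about the polynomials $\Xi^{(g)}_G$ and $\Xi^{(g)}_G[\delta]$ in the theta constants, combined with the classical heat equation. Recall that each theta constant obeys $\partial\theta[\delta']/\partial\Omega_{ij}=\frac{1}{2\pi i(1+\delta_{ij})}\partial_i\partial_j\theta[\delta']$, the $z$-derivatives being evaluated at $z=0$. Since $\Xi^{(g)}_G$ is a polynomial in the theta constants, the chain rule gives
$$\frac{\partial\Xi^{(g)}_G}{\partial\Omega_{ij}}=\sum_{\delta'}\frac{\partial\Xi^{(g)}_G}{\partial\theta[\delta']}\frac{\partial\theta[\delta']}{\partial\Omega_{ij}}=\frac{1}{2\pi i(1+\delta_{ij})}\sum_{\delta'}\frac{\partial\Xi^{(g)}_G}{\partial\theta[\delta']}\,\partial_i\partial_j\theta[\delta']\ ,$$
so that everything is reduced to identifying the polynomial coefficients $\partial\Xi^{(g)}_G/\partial\theta[\delta']$.

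Next I would dispose of the odd characteristics. For odd $\delta'$ the constant $\theta[\delta']$ vanishes identically and $\theta[\delta'](z)$ is an odd function of $z$, so its second $z$-derivative at the origin vanishes as well; such terms therefore drop out of the sum. For the surviving even $\delta'$ one has $\partial_i\theta[\delta']=0$ at $z=0$, whence $\partial_i\partial_j\theta[\delta']/\theta[\delta']=\partial_i\partial_j\log\theta[\delta']$, which is precisely the factor appearing on the right-hand side of the claimed identity; the summation thus collapses to the one over even characteristics intended in the statement.

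The heart of the proof is the polynomial identity
$$\theta[\delta']\,\frac{\partial\Xi^{(g)}_G}{\partial\theta[\delta']}=2^4\,\Xi^{(g)}_G[\delta']\ .$$
I would establish it by differentiating the explicit expression $\Xi^{(g)}_G=2^{-g}\sum_i(-1)^i2^{i(i+1)/2}\sum_{\dim A=i}P(A)^{2^{4-i}}$ term by term. Since $\delta'\in A$ precisely when $\theta[\delta']$ divides $P(A)$, the one-variable Euler relation $\theta[\delta']\,\frac{\partial}{\partial\theta[\delta']}P(A)^s=s\,P(A)^s$ kills all affine spaces not containing $\delta'$ and produces a factor $2^{4-i}$ for those that do. The decisive bookkeeping is then in the powers of two: $2^{i(i+1)/2}\cdot2^{4-i}=2^4\cdot2^{i(i-1)/2}$, so the exponent $i(i+1)/2$ of the cosmological constant collapses back to the exponent $i(i-1)/2$ that defines $\Xi^{(g)}_G[\delta']$, leaving the overall factor $2^4$. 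This is simply the shift between $2^{i(i-1)/2}$ and $2^{i(i+1)/2}$ already noted in the text when passing from the measure to the cosmological constant, now read in reverse.

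Substituting this identity into the chain-rule expression and using $\partial_i\partial_j\theta[\delta']/\theta[\delta']=\partial_i\partial_j\log\theta[\delta']$ on the even characteristics yields the claimed formula. I expect the main obstacle to be a careful treatment of the intermediate division by $\theta[\delta']$: it is cleanest to keep the whole argument at the level of the polynomial identity above, where no division occurs, and to pass to the logarithmic derivative only at the final step on the even characteristics, where $\theta[\delta']$ is generically nonzero and genuinely divides each surviving monomial. For $g=5$ the half-integer exponent $2^{4-i}=1/2$ at $i=5$ forces one to read Euler's relation for a holomorphic square root, but the formal computation---and hence the identity---is unaffected.
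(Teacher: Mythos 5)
Your proof is correct and follows essentially the same route as the paper's: both rest on trading the sum over pairs $(A,\delta\in A)$ for the sum over pairs $(\delta,V)$ together with the exponent bookkeeping $2^{i(i+1)/2}\cdot 2^{4-i}=2^{4}\cdot 2^{i(i-1)/2}$, and both finish with the heat equation. The paper merely differentiates $S_{k,2^{4-k}}$ directly in $\Omega_{ij}$ and inserts $\partial\log\theta[\delta]/\partial\Omega_{ij}$ on the spot, rather than packaging the same combinatorics into your Euler-type identity $\theta[\delta]\,\partial\Xi^{(g)}_G/\partial\theta[\delta]=2^{4}\,\Xi^{(g)}_G[\delta]$.
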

\begin{proof} By a direct computation
\begin{equation*}\begin{split} \frac{\partial S_{k,s}}{\partial\Omega_{ij}}&=\sum_{A,\
\dim A=k}\frac{\partial}{\partial\Omega_{ij}}\prod_{\delta\in
A}\theta[\delta]^{s}=s\sum_{A,\ \dim A=k}\sum_{\delta\in
A}P(A)^s\frac{\partial}{\partial\Omega_{ij}}\log\theta[\delta]\\
&
=s\sum_{\delta}\sum_{V,\ \dim
V=k}P(\delta+V)^s\frac{\partial}{\partial\Omega_{ij}}\log\theta[\delta]\ ,
\end{split}\end{equation*} so that
\begin{equation*}\begin{split}\frac{\partial\Xi^{(g)}_G}{\partial\Omega_{ij}}&=2^{-g}\sum_{k=1}^g(-1)^k2^{\frac{k(k+1)}{2}}
\frac{\partial S_{k,2^{4-k}}}{\partial\Omega_{ij}}\\
&=2^{4-g}\sum_{\delta}\sum_{k=1}^g(-1)^k2^{\frac{k(k-1)}{2}}\sum_{V,\
\dim
V=k}P(\delta+V)^{2^{4-k}}\frac{\partial}{\partial\Omega_{ij}}\log\theta[\delta]\\
&=2^4\sum_{\delta}\Xi^{(g)}_G[\delta]\frac{\partial}{\partial\Omega_{ij}}\log\theta[\delta]=
2^4\sum_{\delta}\frac{\Xi^{(g)}_G[\delta]}{2\pi
i(1+\delta_{ij})}\partial_i\partial_j\log\theta[\delta]\ ,
\end{split}
\end{equation*}
where, in the last line, we used the heat equation for the theta
function
\be\label{heat} \partial_i\partial_j\theta[\delta](z,\Omega)=2\pi
i(1+\delta_{ij})\frac{\partial}{\partial\Omega_{ij}}\theta[\delta](z,\Omega)\
. \ee
\end{proof}

\begin{proposition}\label{th:faketwo} In the case $\Xi^{(g)}[\delta]$ in \eqref{faketwo} is identified with
$\Xi^{(g)}_G[\delta]$, we have
$$\hat A_2(a,b)=\omega(a,b)\Xi_G^{(g)}+\frac{2\pi i}{16}
\sum_{i\le
j}^g\frac{\partial\Xi_G^{(g)}}{\partial\Omega_{ij}}(\omega_i(a)\omega_j(b)+\omega_i(b)
\omega_j(a))\ .
$$
\end{proposition}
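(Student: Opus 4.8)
The plan is to reduce the sum over even spin structures defining $\hat A_2$ to a derivative of the cosmological constant $\Xi_G^{(g)}$ by means of the preceding lemma, the bridge between the two being the classical expansion of the squared Szeg\"o kernel. First I would invoke Fay's identity for the square of the Szeg\"o kernel attached to an even spin structure $\delta$,
\[
S_\delta(a,b)^2=\omega(a,b)+\sum_{i,j=1}^g\partial_i\partial_j\log\theta[\delta]\,\omega_i(a)\omega_j(b)\ ,
\]
where $\omega(a,b)$ is the normalised bidifferential of the second kind, the $\omega_i$ are the normalised holomorphic differentials, and $\partial_i\partial_j\log\theta[\delta]$ is understood at $z=0$. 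The essential structural point is that $\omega(a,b)$ is independent of $\delta$, so that the entire spin-structure dependence is carried by the symmetric matrix $\partial_i\partial_j\log\theta[\delta]$ of second logarithmic derivatives of the theta constant --- precisely the object appearing in the lemma.

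Next I would insert this expansion into $\hat A_2(a,b)=\sum_{\delta\text{ even}}\Xi_G^{(g)}[\delta]S_\delta(a,b)^2$ and separate the two contributions. The $\delta$-independent bidifferential factors out of the first, leaving $\omega(a,b)\sum_{\delta\text{ even}}\Xi_G^{(g)}[\delta]=\omega(a,b)\,\Xi_G^{(g)}$ by the very definition of the cosmological constant. For the second contribution,
\[
\sum_{i,j}\omega_i(a)\omega_j(b)\sum_{\delta\text{ even}}\Xi_G^{(g)}[\delta]\,\partial_i\partial_j\log\theta[\delta]\ ,
\]
I would apply the lemma, which identifies the inner sum with $\tfrac{2\pi i(1+\delta_{ij})}{2^4}\,\partial\Xi_G^{(g)}/\partial\Omega_{ij}$. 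One should note here that the sum over all $\delta$ in the lemma collapses to even $\delta$ automatically, since $\theta[\delta](0)=0$ for odd $\delta$ kills those terms; and that for even $\delta$ one has $\partial_i\theta[\delta](0)=0$, so that $\partial_i\partial_j\log\theta[\delta]$ coincides with $\partial_i\partial_j\theta[\delta]/\theta[\delta]$ at $z=0$, matching the form in which the heat equation \eqref{heat} was used.

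Finally I would reorganise the double sum over all $(i,j)$ into a sum over $i\le j$, and the factor $(1+\delta_{ij})$ produced by the heat equation is exactly what makes this work. For $i\neq j$, pairing $(i,j)$ with $(j,i)$ and using the symmetry $\partial\Xi_G^{(g)}/\partial\Omega_{ij}=\partial\Xi_G^{(g)}/\partial\Omega_{ji}$ of the period matrix produces the symmetrised combination $\omega_i(a)\omega_j(b)+\omega_i(b)\omega_j(a)$ with weight $1$, while for $i=j$ the value $(1+\delta_{ii})=2$ reproduces the same symmetrised combination (there equal to $2\omega_i(a)\omega_i(b)$). Collecting the two ranges yields the claimed single sum over $i\le j$ with overall constant $2\pi i/2^4=2\pi i/16$.

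The only genuinely nontrivial input is Fay's expansion of $S_\delta^2$; once this is in hand the argument is bookkeeping. Accordingly I expect the main care --- rather than any conceptual obstacle --- to lie in fixing conventions consistently: the meaning of $\partial/\partial\Omega_{ij}$ on the symmetric matrix $\Omega$, the compatibility of the $(1+\delta_{ij})$ normalisation in \eqref{heat} with that in the lemma, and the evaluation at $z=0$ where the parity of $\theta[\delta]$ is repeatedly invoked.
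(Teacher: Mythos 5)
Your proof is correct and follows essentially the same route as the paper's: insert Fay's identity \eqref{questa} for $S_\delta(a,b)^2$ into the definition of $\hat A_2$, factor out the $\delta$-independent bidifferential $\omega(a,b)$, and apply the lemma \eqref{sumdelta} to convert the remaining spin-structure sum into $\partial\Xi_G^{(g)}/\partial\Omega_{ij}$, with the $(1+\delta_{ij})$ factor accounting for the passage from the full $(i,j)$ sum to the sum over $i\le j$. The extra remarks on odd characteristics and on $\partial_i\theta[\delta](0)=0$ for even $\delta$ are correct and only make explicit what the paper leaves implicit.
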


\begin{proof}
First use the relation (formula 38 page 25 of \cite{Fay:1973}, see
also appendix \ref{a:twotheta} for a proof)
\be\label{questa}S_\delta(a,b)^2=\omega(a,b)+\sum_{i,j}^g\omega_i(a)\omega_j(b)\partial_i\partial_j
\log\theta[\delta](0)\ , \ee to obtain \be\label{A2formula} \hat
A_2[\delta](a,b)\equiv \Xi^{(g)}[\delta]S_\delta(a,b)^2=
\Xi^{(g)}[\delta]\omega(a,b)+\sum_{i,j}^g\Xi^{(g)}[\delta]\omega_i(a)\omega_j(b)\partial_i\partial_j
\log\theta[\delta](0)\ , \ee
 then use
the previous lemma.
\end{proof}

\noindent {}This result leads immediately to the known results for
$g\le 3$ and to a new one for $g=4$.
\begin{corollary}
For $g\le 3$, \be\label{faketwog3} \hat A_2(a,b)= 0\ ,\ee while for
$g=4$ \be\label{faketwog4} \hat A_2(a,b)d\mu^{(4)}_{Bos}=c\sum_{i\le
j}(-1)^{m_{ij}}
(\omega_i(a)\omega_j(b)+\omega_i(b)\omega_j(a))\bigwedge_{\substack{k\le
l,\,(k,l)\neq(i,j)}}^4d\Omega_{kl}\neq 0\ ,\ee for some non-zero
constant $c\in\CC$ and $m_{ij}\in\ZZ$.
\end{corollary}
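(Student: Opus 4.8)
The plan is to evaluate Proposition \ref{th:faketwo} genus-by-genus, using the explicit value of $\Xi_G^{(g)}$ computed earlier in this subsection. For $g\le 3$ we have $\Xi_G^{(g)}=2^{g-1}(2^g+1)D_gJ^{(g)}$, and since $J^{(g)}$ vanishes identically on $\Sieg_g$ for $g\le 3$, the form $\Xi_G^{(g)}$ is identically zero. Consequently both terms in Proposition \ref{th:faketwo} vanish: the first because it is proportional to $\Xi_G^{(g)}$, and the second because all its partial derivatives $\partial\Xi_G^{(g)}/\partial\Omega_{ij}$ are derivatives of the zero function. This gives \eqref{faketwog3} immediately.

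For $g=4$ the situation is the crux. Here $J^{(4)}$ vanishes on the Jacobian locus $\J_4$ but not identically on $\Sieg_4$; in fact $\J_4$ is precisely the divisor of $J^{(4)}$. Thus on $\J_4$ the first term of Proposition \ref{th:faketwo} still drops out (since $\Xi_G^{(4)}\propto J^{(4)}=0$ there), but the second term need not vanish, because the normal derivative of $J^{(4)}$ transverse to its zero divisor is nonzero. Concretely, I would write $\Xi_G^{(4)}=2^3(2^4+1)D_4 J^{(4)}$ and restrict the second term of Proposition \ref{th:faketwo} to $\J_4$. The key observation is that a holomorphic function vanishing on a divisor has a well-defined differential along the conormal direction, so $\partial\Xi_G^{(4)}/\partial\Omega_{ij}$ restricted to $\J_4$ is proportional to the components of $dJ^{(4)}$, which are not all zero since $J^{(4)}$ is not identically zero. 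To display the result as a top-form on moduli space, I would contract $\hat A_2(a,b)$ with the bosonic measure $d\mu^{(4)}_{Bos}$, which involves $\bigwedge_{k\le l}d\Omega_{kl}$; the constraint of lying on $\J_4$ means one differential is eliminated via the Schottky-type relation $J^{(4)}=0$, so that $\partial\Xi_G^{(4)}/\partial\Omega_{ij}$ effectively reconstructs the missing wedge factor $\bigwedge_{(k,l)\neq(i,j)}d\Omega_{kl}$ up to the sign $(-1)^{m_{ij}}$ coming from reordering the wedge product. This yields the stated formula \eqref{faketwog4} with nonzero constant $c$ proportional to $D_4$, whose value \eqref{Dfour} is manifestly nonzero.

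The main obstacle will be making the restriction to $\J_4$ rigorous and extracting the sign $m_{ij}$ and the constant $c$ correctly. The subtlety is that Proposition \ref{th:faketwo} is an identity of functions on $\Sieg_4$, but $\hat A_2$ is only required to vanish on the Jacobian locus, so I must carefully separate the contribution of $\Xi_G^{(4)}$ itself (which vanishes on $\J_4$) from that of its transverse derivative (which does not). The honest statement is that $dJ^{(4)}$ furnishes a nonzero conormal covector to $\J_4\subset\Sieg_4$, and pairing the second term of Proposition \ref{th:faketwo} against the volume form produces, after using $J^{(4)}|_{\J_4}=0$ to delete the $d\Omega_{ij}$ directions one at a time, exactly the combination $\sum_{i\le j}(-1)^{m_{ij}}(\omega_i(a)\omega_j(b)+\omega_i(b)\omega_j(a))\bigwedge_{(k,l)\neq(i,j)}d\Omega_{kl}$. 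Non-vanishing then follows because the holomorphic differentials $\omega_i(a)\omega_j(b)$ are linearly independent as functions of the insertion points $a,b\in C$, so no cancellation among the wedge terms is possible, and $c\propto D_4\neq 0$ by \eqref{Dfour}.
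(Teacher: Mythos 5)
Your argument follows the paper's proof essentially step for step: for $g\le 3$ both terms of Proposition \ref{th:faketwo} die because $\Xi_G^{(g)}\equiv 0$ on $\Sieg_g$, and for $g=4$ one restricts to $\J_4$, where $\Xi_G^{(4)}=D_4J^{(4)}$ kills the $\omega(a,b)$ term while the transverse derivative of $J^{(4)}$ survives and is absorbed into the Poincar\'e-residue expression $d\mu^{(4)}_{Bos}=(-1)^{m_{ij}}\bigwedge_{k\le l,\,(k,l)\neq(i,j)}d\Omega_{kl}\,/\,(\partial J^{(4)}/\partial\Omega_{ij})$ for the bosonic measure. The one step you under-justify is the claim that the $\partial J^{(4)}/\partial\Omega_{ij}$ do not all vanish identically on $\J_4$: this does not follow merely from $J^{(4)}\not\equiv 0$ (a function of the form $h^2$ has identically vanishing gradient on its zero set); the paper gets it from Igusa's theorem that $\J_4$ is the irreducible divisor of $J^{(4)}$ inside $\Sieg_4$, so that $J^{(4)}$ vanishes there to first order. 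With that citation supplied, your proof coincides with the paper's.
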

\begin{proof} Eq.\eqref{faketwog3} follows immediately from proposition \ref{th:faketwo} and the fact that for $g\le 3$
$\Xi_G^{(g)}=0$ identically on $\Sieg_g$. As proved in
\cite{SalvatiManni:2008qa}, $\Xi_G^{(4)}=D_4J^{(4)}$, with $D_4\neq
0$ given in \eqref{Dfour}. Furthermore, $\frac{\partial
J^{(4)}}{\partial\Omega_{ij}}$ cannot vanish identically on $\J_4$
for all $i,j$, because $\J_4$ is the divisor of $J^{(4)}$ and is
irreducible \cite{Igusa:1982}. Fix some $1\le i,j\le 4$ and consider
the open subset of $\J_4$ where $\partial
J^{(4)}/\partial\Omega_{ij}\neq 0$. The bosonic string measure on
this subset is given (up to a constant) by
$$ d\mu^{(4)}_{Bos}=(-1)^{m_{ij}}\frac{
\bigwedge_{\substack{k\le
l,\,(k,l)\neq(i,j)}}^4d\Omega_{kl}}{\partial
J^{(4)}/\partial\Omega_{ij}}\ ,$$ where $m_{ij}$ is the position of
$d\Omega_{ij}$ with respect to a given ordering of
$\{d\Omega_{kl}\}_{k\le l}$. Then, for each point in $\J_4$, we have
\begin{equation*}\begin{split} \hat A_2(a,b)d\mu^{(4)}_{Bos}&=\frac{2\pi
iD_4}{16}d\mu^{(4)}_{Bos}\sum_{\substack{i\le j\\ \partial
J^{(4)}/\partial\Omega_{ij}\neq 0}} \frac{\partial
J^{(4)}}{\partial\Omega_{ij}}(\omega_i(a)\omega_j(b)+\omega_i(b)\omega_j(a))\\
&=\frac{2\pi iD_4}{16}\!\!\!\!\sum_{\substack{i\le j\\ \partial
J^{(4)}/\partial\Omega_{ij}\neq
0}}\!\!\!\!\! (-1)^{m_{ij}}
(\omega_i(a)\omega_j(b)+\omega_i(b)\omega_j(a))\bigwedge_{\substack{k\le l,\,(k,l)\neq(i,j)}}^4d\Omega_{kl}\\
&=\frac{2\pi
iD_4}{16}\sum_{i\le j}(-1)^{m_{ij}}
(\omega_i(a)\omega_j(b)+\omega_i(b)\omega_j(a))\bigwedge_{\substack{k\le
l,\,(k,l)\neq(i,j)}}^4d\Omega_{kl}\ ,
\end{split}\end{equation*}
where we used the fact that $d\Omega_{11}\wedge\ldots\wedge
\hat{d\Omega_{ij}}\wedge\ldots\wedge d\Omega_{44}=0$ when $\partial
J^{(4)}/\partial\Omega_{ij}=0$.
\end{proof}

This corollary also implies that the proposed three-point function
\be\label{fakethree} \hat A_3(a,b,c):=\sum_{\delta\text{
even}}\Xi^{(g)}[\delta]S_\delta(a,b)S_\delta(b,c)S_\delta(c,a)\ ,\ee
does not vanish for $g=4$, as expected from space-time
supersymmetry. To see this, note that, in the limit $c\to a$, the
coefficient of the term $(c-a)^{-2}$ coincides with $\hat A_2(a,b)$.
As discussed in \cite{Matone:2008td}, the fact that \eqref{faketwo}
and \eqref{fakethree} do not vanish for $g=4$ does not really rule
out the proposals $\Xi^{(4)}[\delta]$ for the chiral measure,
because it is reasonable that the two- and three-point functions
receive other contributions different from $\hat A_2$ and $\hat
A_3$. For the same reasons, however, the fact that $\hat A_2$
vanishes at $g=3$ cannot be considered as a real argument in favor
of this ansatz. In the following sections, we will consider a more
reliable computation for the two-point function based on the
factorisation of vacuum amplitudes.

\subsection{The OPSMY ansatz}\label{s:OPSMY}

In this section, we define $\Xi^{(g)}[\delta]$ in terms of theta
series of $16$-dimensional unimodular lattices, following Oura,
Poor, Salvati Manni and Yuen (OPSMY) \cite{oura-2008}. A
$d$-dimensional lattice $\Lambda\subset\RR^d$ is called unimodular
(or self-dual) if it is isomorphic to its dual
$\Lambda\cong\Lambda^*$, where
$$ \Lambda^*:=\{\lambda\in\RR^d\mid \lambda\cdot \mu\in\ZZ\text{ for all
}\mu\in\Lambda\}\ .$$ A unimodular lattice is called even if the
norm $\lambda\cdot\lambda$ of all its vectors is an even integer,
and odd otherwise. There are eight $16$-dimensional unimodular
lattices, listed in table \ref{t:lattices}, where $E_8^2$ and
$D_{16}^+$ are even and the others odd \cite{ConwaySloane}. The genus $g$ theta series
of a lattice $\Lambda$ is a holomorphic function on $\Sieg_g$ defined as
\be\label{thetaseries}
\Theta^{(g)}_\Lambda (\Omega):=\sum_{\lambda_1,\ldots,\lambda_g\in\Lambda}e^{\pi
i \sum_{i,j}^g\lambda_i\cdot\lambda_j \Omega_{ij}} \ .\ee

\begin{table}
\begin{center}
\begin{tabular}{|c|c|c|c|c|c|c|c|c|}\hline $k$ &
$\Lambda_k$  & parity & $n_k$ & $\g_k=\tilde\g_k\oplus D_{n_k}$ &
$l_k$ & $N_k$ & $\Lambda_k^{(1)}$ & $\Lambda_k^{(2)}$ \\\hline
 $0$ & $(D_8\oplus D_8)^+$ & odd & $0$ & $(D_8\oplus D_8)\oplus 0$ & $28$ & 224 & $D_{16}^+$ & $E_8^2$ \\\hline
 $1$ & $\ZZ\oplus A_{15}^+$ & odd & 1 & $A_{15}\oplus 0$ & 32 & 240 & $D_{16}^+$ & $D_{16}^+$ \\\hline
 $2$ & $\ZZ^2\oplus(E_7\oplus E_7)^+$ & odd & 2 & $2E_7\oplus 2A_1 $ & 36 & 256 & $E_8^2$ & $E_8^2$ \\\hline
 $3$ & $\ZZ^4\oplus D_{12}^+$ & odd & 4 & $D_{12}\oplus D_4$ & 44 & 288 & $D_{16}^+$ & $D_{16}^+$ \\\hline
 $4$ & $\ZZ^8\oplus E_8$ & odd & 8 & $E_8\oplus D_8$ & 60 & 352 & $E_8^2$ & $E_8^2$ \\\hline
 $5$ & $\ZZ^{16}$ & odd & 16 & $0\oplus D_{16}$ & 92 & 480 & $D_{16}^+$ & $D_{16}^+$ \\\hline
 $6$ & $E_8\oplus E_8$ & even & 0 & $(E_8\oplus E_8)\oplus 0$ & 60 & 480 & $E_8^2$ & $E_8^2$ \\\hline
 $7$ & $D_{16}^+$ & even & 0 & $D_{16}\oplus 0$ & 60 & 480 & $D_{16}^+$ & $D_{16}^+$ \\\hline
\end{tabular}
\end{center}\caption{\small The $16$-dimensional unimodular lattices.
The vectors of norm $2$ form the root system of the Lie algebra
$\g_k$. Each lattice can be decomposed as
$\Lambda_k=\tilde\Lambda_k\oplus\ZZ^{n_k}$, where $\tilde\Lambda_k$
has minimal norm $2$, and the associated Lie algebras decompose
accordingly $\g_k=\tilde\g_k\oplus D_{n_k}$ (with the identification
$D_2\equiv A_1\oplus A_1$). $l_k$ is twice the dual Coxeter number of
$\tilde\g_k$ and $N_k$ is the number of roots in $\g_k$. The value
$l_5=92$ is chosen for later convenience. See appendix
\ref{a:thetaseries} for the definition of
$\Lambda_k^{(1)},\Lambda_k^{(2)}$.} \label{t:lattices}\end{table}

\noindent Following \cite{oura-2008}, let us define $\xi^j:=
(\xi^j_0,\ldots,\xi^j_5)\in\CC^6$, $j=0,\ldots,5$, by\footnote{We
use a different normalisation with respect to \cite{oura-2008}, so
that $\Xi^{(g)}_{OPSMY}[\delta]$ and $\Xi^{(g)}_G[\delta]$ have the
same normalisation. In particular, $c^i_k$ is $2^{4i}$ times the corresponding coefficient in \cite{oura-2008}.}
$$\textstyle{ \xi^0:=(1,1,1,1,1)\ , \qquad \qquad \xi^j:=(0,\frac{1}{8^j},\frac{1}{4^j},\frac{1}{2^j},1,2^j)\ ,\quad j=1,\ldots,5\
, }$$ and the dual basis
$\{(c^i_0,\ldots,c^i_5)\}_{i=0,\ldots,5}\subset\CC^6$, so that
$$ \sum_{k=0}^5 c^i_k\xi^j_k=\delta_{ij}\ .
$$ For $g< 5$, the theta series of the $16$-dimensional unimodular lattices
$$
\Theta_k^{(g)}:=\Theta_{\Lambda_k}^{(g)}\ ,$$ $k=0,\ldots,7$, are
not linearly independent and the linear relations can be easily
expressed using the coefficients $c_k^i$. In particular
\cite{oura-2008}, \be\label{reltheta3} \sum_{k=0}^5
c^i_k\Theta^{(g)}_k=0\ , \qquad \text{for }g\le 3,\quad g<i\le 5\ ,
\ee \be\label{reltheta4} \sum_{i=0}^5 c^5_k\Theta^{(4)}_k=CJ^{(4)}\
,\ee where $J^{(g)}$ is defined in \eqref{jayg} and \be\label{Ccost}
C=-\frac{2^5\cdot 3}{7}\ . \ee There are also well-known relations
between the theta series of even lattices \be\label{relsch}
J^{(g)}=0\ ,\quad g\le 3\ ,\qquad\qquad J^{(4)}_{\rvert \J_4}=0\
.\ee

\bigskip

\noindent Set
$$
\Xi^{(g)}_{OPSMY}[0](\Omega):=\sum_{k=0}^5c^g_k\Theta_{k}^{(g)}(\Omega)\
. $$ By \eqref{xigendelta}, $\Xi_{OPSMY}^{(g)}[\delta]$ for every
even $\delta\in\FF_2^{2g}$ can be easily expressed in terms of the
corresponding lattice theta series
\be\label{defseries}\Theta_k^{(g)}[\delta](\Omega)=\det(C\Omega+D)^{-8}\,\Theta_k^{(g)}\bigl((A\Omega+B)(C\Omega+D)^{-1}\bigr)
\ , \ee where $\left(\begin{smallmatrix} A & B\\ C & D
\end{smallmatrix}\right)$ satisfies \eqref{spintransf}. Therefore
$$
\Xi^{(g)}_{OPSMY}[\delta](\Omega)=\sum_{k=0}^5c^g_k\Theta_{k}^{(g)}[\delta](\Omega)\ . $$
A useful expression for $\Theta_\Lambda^{(g)}[\delta](\Omega)$ is (see
appendix \ref{a:thetaseries} for a derivation)
\be\label{thserexpr}\begin{split}
\Theta_\Lambda^{(g)}[\delta](\Omega)=\sum_{\lambda_1,\ldots,\lambda_g\in
\Lambda}\exp\Bigl(\pi i
\sum_{i,j}^g(\lambda_i+\frac{\delta'_iu}{2})\cdot(\lambda_j+\frac{\delta'_ju}{2})
\Omega_{ij}+2\pi i \sum_{i}^g(\lambda_i+\frac{\delta'_iu}{2})\cdot
(\delta''_i\frac{u}{2})\Bigr)\ ,\end{split}\ee where $u\in\Lambda$
is a parity vector for $\Lambda$, i.e. $u\cdot\lambda\equiv
\lambda\cdot\lambda\mod 2$ for all $\lambda\in\Lambda$. We take
\eqref{thserexpr} to be the definition of $\Theta_{\Lambda}^{(g)}[\delta]$
for a general integral lattice $\Lambda$ and for every theta characteristic $\delta\in\FF_2^{2g}$. Note that, even though
this definition makes sense more generally, eq.\eqref{defseries}
holds for some $\left(\begin{smallmatrix}A & B\\ C & D\end{smallmatrix}\right)\in \Sp(2g,\ZZ)$ only if $\Lambda$ is a $16$-dimensional unimodular lattice and $\delta$ is even.

\bigskip

Summing $\Xi_{OPSMY}^{(g)}[\delta]$ over spin structures yields (see
appendix \ref{a:thetaseries} for the derivation)
$$ \sum_{\delta\text{
even}}\Xi^{(g)}_{OPSMY}[\delta](\Omega)=2^{g-1}(2^g+1)B_gJ^{(g)}\
,$$ where
\be\label{Bfour} B_4=\frac{2^2\cdot 3^3\cdot 5\cdot 11}{7\cdot 17}\ ,
\ee and
$$B_5=-\frac{2^5\cdot 17}{7\cdot 11}\ .$$
Thus, in analogy with \eqref{corrGru}, we define a modified
measure for $g=5$
$$ \tilde\Xi^{(5)}_{OPSMY}[0](\Omega):=\sum_{k=0}^7c^5_k\Theta_{k}^{(5)}(\Omega)=\Xi_{OPSMY}^{(5)}[0](\Omega)
-B_5J^{(5)}\ , $$ where, for all $g$ (this modification is
irrelevant for $g\le 4$), we set
\be\label{csixcseven} c^g_6=-c^g_7= 
-B_5\ ,\ee so that eq.\eqref{xicosm} is satisfied.

\medskip

It is known that two forms $\Xi^{(g)}$, satisfying
\eqref{ximod},\eqref{xifact} and \eqref{xinorm}, must be the same for
$g=2,3$ while for $g=4$ differ by a multiple of the Igusa-Schottky form
$J^{(4)}$, which vanishes on the Jacobian locus, so that, by
\eqref{Dfour} and \eqref{Bfour},
\be\label{dueventisei} \Xi^{(g)}_{OPSMY}[\delta]=\Xi^{(g)}_{G}[\delta]\ ,\quad g\le 3\
,\qquad\qquad\qquad
\Xi^{(4)}_{OPSMY}[\delta]=\Xi^{(4)}_{G}[\delta]+(B_4-D_4)J^{(4)}\ .\ee

It is an open question whether $\tilde \Xi^{(5)}_{OPSMY}[\delta]$ and
$\tilde\Xi^{(5)}_{G}[\delta]$ coincide on the Jacobian locus (see
\cite{DuninBarkowski:2009ej} for a discussion on this point). From
now on, we will drop the subscripts in $\Xi^{(g)}_G$ and
$\Xi^{(g)}_{OPSMY}$ when the result is independent of the particular
definition.

\section{Two-point function from factorisation}

Consider a family of Riemann surfaces $\tilde C_t$, $0<|t|<1$ of genus $g+1$ such that,
in the limit $t\to 0$, one of the handles becomes infinitely long or, in the conformally equivalent picture,
the cycle $\tilde\alpha_{g+1}$ around this handle is pinched to form a node. This family of surfaces can be defined using the standard plumbing fixture procedure, see subsection \ref{a:degen}.

It is an old idea, both in conformal field theory and string theory,
that, in the limit $t\to 0$, the amplitudes defined on $\tilde C_t$
must satisfy suitable factorisation properties
\cite{Friedan:1986ua}. Let us give a rough description of the
physical picture behind this idea in the simple case of the
zero-point amplitude $Z$ in some conformal field theory. A genus
$g+1$ Riemann surface $\tilde C_t$ can be obtained by ``gluing'' a
long thin cylinder to a Riemann surface $ C$ of genus $g$ with two
holes. In the limit $t\to 0$, the cylinder becomes infinitely thin
and the holes collapse to two punctures $a,b\in C$. Then, the
boundary conditions of the fields around these punctures can be
described in terms of vertex operators at $a$ and $b$. In a
state-operator formalism, the propagation of states along the
cylinder is given by an operator $t^{L_0}\bar t^{\bar L_0}$, where
$L_0$ and $\bar L_0$ generate the world-sheet dilatations and
rotations. Thus, the zero-point function $Z$ can be expanded as
\be\label{partitionf} Z \to  \sum_{\phi} \langle \phi| t^{L_0}\bar
t^{\bar L_0}|\phi\rangle\langle V(\phi,a) V(\phi^*,b)\rangle_C\ ,
\ee where the sum runs over a complete set of states of the theory
and $\phi^*$ denote the conjugated of $\phi$.

\bigskip

In the following, we will apply this procedure to obtain the (spin dependent part of the) chiral two-point function for two
NS massless states on a surface of genus $g$ from factorisation of the chiral measure at genus $g+1$. Specialising
eq.\eqref{partitionf} to the case where $Z$ is the superstring chiral zero-point function before GSO projection, one obtains
$$ d\mu^{(g+1)}[\tilde\delta]\to \sum_{\phi} t^{h_\phi}\langle V(\phi,a) V(\phi^*,b)\rangle_C \ ,
$$ where the sum is over a complete set of chiral $L_0$-eigenstates with $L_0\phi=h_\phi\phi$. The spin structure $\tilde\delta\in\FF_2^{2g+2}$
is given by $$\tilde\delta=\spin{\tilde\delta'_1 & \ldots & \tilde\delta'_{g+1}}{\tilde\delta''_1 & \ldots & \tilde\delta''_{g+1}}=
\spin{\delta'_1 & \ldots & \delta'_g & \epsilon'}{\delta''_1 & \ldots & \delta''_g & \epsilon''} \in\FF_2^{2g+2}\ ,$$ where each $\tilde\delta_i'$
(respectively, $\tilde\delta_i''$), $i=1,\ldots,g+1$, determines the periodicity of the world-sheet fermionic fields around the cycle $\tilde\alpha_i$ (resp., $\tilde\beta_i$) of $\tilde C_t$.
In particular, the cycle $\tilde\alpha_{g+1}$ encircles the infinitely long cylinder in the limit $t\to 0$, so that, when $\epsilon'=0$
(respectively, $\epsilon'=1$), the sum in \eqref{partitionf} runs only over the Neveu-Schwarz (resp., Ramond) sector.
Thus, the two-point function at genus $g$ for NS states and for an arbitrary even spin structure $\delta\in\FF_2^{2g}$  can be obtained from the degeneration limit of
$$ d\mu^{(g+1)}\thspin{\delta' & 0}{\delta'' & 0}\qquad\qquad\text{or}\qquad\qquad d\mu^{(g+1)}\thspin{\delta' & 0}{\delta'' & 1}\ .
$$  To project out the NS tachyon, we consider a linear combination such that its leading term corresponds to massless states.
Because the GSO projection is implemented by summing the chiral
measure over all spin structures without phases, the correct linear
combination to consider is
$$ \frac{1}{2}\Bigl(d\mu^{(g+1)}\thspin{\delta' & 0}{\delta'' & 0}+d\mu^{(g+1)}\thspin{\delta' & 0}{\delta'' & 1}\Bigr)=d\mu^{(g+1)}_{Bos}
X_{NS}\thspin{\delta'}{\delta''}\ ,
$$
where
$$ X_{NS}\thspin{\delta'}{\delta''}:=
\frac{1}{2}\Bigl(\tilde\Xi^{(g+1)}\thspin{\delta'&0}{\delta''&0}
+\tilde\Xi^{(g+1)}\thspin{\delta'&0}{\delta''&1}\Bigr)\
,$$ is the spin dependent part. Indeed, the tachyon contribution would correspond to the
$t^{1/2}$ power in the expansion of $X_{NS}$, but it can be verified that all half-integer powers of $t$ are canceled
by summing over the two spin structures. It follows that,
as $t\to 0$,
\be\label{XNSexp} X_{NS}[\delta]=t \frac{A_2[\delta](a,b)}{d z(a)\, dz(b)}+O(t^2)\ ,
\ee up to an irrelevant spin-independent factor, where $A_2[\delta]$ is the chiral two-point function for NS
massless states. Note that $A_2[\delta](a,b)$ is a meromorphic $1$-differential in $a,b$ and $A_2[\delta](a,b)/(d z(a)dz(b))$ corresponds to its evaluation
in the local coordinates around $a$ and $b$ used in the plumbing fixture construction, see subsection \eqref{a:degen}.

\subsection{Computation of the two-point function}

Here we compute $A_2[\delta](a,b)$ with the mentioned coordinate choice (see eq.\eqref{XNSexp}), using the OPSMY ansatz for the chiral measure at genus $g+1$, for $g=2,3,4$.
As we will see, such a choice of the local coordinate at the node of degenerate Riemann
surfaces leads to a considerable simplification of the calculations.

For a general $\tilde\Omega\in \Sieg_{g+1}$ define
$$ \tilde\Omega=\begin{pmatrix} \Omega & \tp z\\ z &
\frac{1}{2\pi i}\log q \end{pmatrix} \ , $$ where $\Omega\in\Sieg_g$, $z\in\CC^g$ and
$q\in\CC$, $0<|q|<1$. Consider a Riemann surface of genus $g+1$ and take  the degeneration
limit in which the cycle $\tilde\alpha_{g+1}$ is pinched. One obtains a
singular surface of genus $g$ with two points $a,b$ identified to
form a node. Let $t$ be the degeneration parameter and
$\tilde\Omega(t)\in \Sieg_{g+1}$ the corresponding period matrix.
 As $t\to 0$, for a suitable choice of local coordinates (see appendix \ref{a:degen}), we have
\begin{align} q(t):=&
=t+O(t^2)\ ,\qquad z_i(t)=\int_a^b\omega_i+O(t^2)\
,\quad i=1,\ldots,g\ ,\\
\Omega_{ij}(t)=&\Omega_{ij}+2\pi it
E(a,b)^2(\omega_i(a)\omega_j(b)+\omega_i(b)\omega_j(a))+O(t^2)\
,\quad i,j=1,\ldots,g\ ,\end{align} where $\Omega\in\J_g$. With respect to this choice of local coordinates, \eqref{XNSexp} becomes
\be\label{XNSexp2} X_{NS}[\delta]= t E(a,b)^2 A_2[\delta](a,b)+O(t^2)\ .
\ee
For a generic $\tilde\Omega\in\Sieg_{g+1}$, let us take the
expansion of $X_{NS}[\delta]\equiv X_{NS}[\delta](q,z,\Omega)$ as $q\to 0$
$$ X_{NS}[\delta](q,z,\Omega)= G^{(g)}[\delta](\Omega)
+q F^{(g)}[\delta](\Omega,z)+O(q^2)\ .$$ The effect of summing over
the two spin structures $\thspin{\delta'&0}{\delta''&0}$ and
$\thspin{\delta'&0}{\delta''&1}$ is to project out the half-integer
powers in $q$. Modular properties of
$\tilde\Xi^{(g+1)}\thspin{\delta'&0}{\delta''&0}(\tilde\Omega)$ and
$\tilde\Xi^{(g+1)}\thspin{\delta'&0}{\delta''&1}(\tilde\Omega)$ imply that
$G^{(g)}[\delta](\Omega)$ is independent of $z$ and
$F^{(g)}[\delta](\Omega,z)$ is a section of $|2\Theta|$ (see
appendix \ref{a:twotheta}). It follows that
$G^{(g)}[\delta](\Omega)$ can be computed for $z=0$. The
factorisation properties of the theta series
$$\Theta_k^{(g+1)}\thspin{\delta'&0}{\delta''&*}(\tilde\Omega)\stackrel{z\to 0}{\longrightarrow}
\Theta_k^{(1)}\thspin{0}{ *
}((2\pi i)^{-1}\log q)\Theta_k^{(g)}\thspin{\delta'}{\delta''}(\Omega)\
,$$ yield
$$
X_{NS}[\delta](q,z=0,\Omega)=\sum_{k=0}^7c^{g+1}_k(1+N_kq+O(q^2))\Theta_k^{(g)}[\delta](\Omega)\
,$$ where $N_k$ is the number of vectors of norm $2$ in the lattice
$\Lambda_k$ (see table \ref{t:lattices}). By \eqref{reltheta3},
\eqref{reltheta4}, \eqref{relsch} and \eqref{csixcseven}, it follows
that
$$ G^{(g)}[\delta](\Omega)=\sum_{k=0}^7c_k^{g+1}\Theta_k^{(g)}[\delta](\Omega)=(C-B_5)J^{(g)}(\Omega)\
. $$ Similarly, one can find the expansion of
$(c^{g+1}_0N_0,\ldots,c^{g+1}_5N_5)\in\CC^6$ with respect to the
basis $c^0,\ldots,c^5$ by computing the scalar products with the
dual basis $\xi^0,\ldots,\xi^5$
$$ \sum_{k=0}^5 c^{g+1}_kN_k\xi_k^{i}=0\ ,\ i<g\ ,\qquad
\sum_{k=0}^5 c^{g+1}_kN_k\xi_k^{g}=128\ ,\qquad \sum_{k=0}^5
c^{5}_kN_k\xi_k^{5}=720\ ,$$ to obtain (note that $N_6=N_7=480$)
\be\label{Fgzero}
F^{(g)}[\delta](\Omega,0)=128\Xi^{(g)}_{OPSMY}[\delta](\Omega)+(720C-480B_5)J^{(g)}\
.\ee It follows that, as $t\to 0$,
$$
X_{NS}[\delta]=t\Bigl(\sum_{i,j}^g2\pi i
E(a,b)^2\omega_i(a)\omega_j(b)(1+\delta_{ij})(C-B_5)\frac{\partial
J^{(g)}}{\partial\Omega_{ij}}+F^{(g)}[\delta](\Omega,b-a)\Bigr)+O(t^2)\ .$$ Since $F^{(g)}$ is a section of $|2\Theta|$, we can use the
standard result (see appendix \ref{a:twotheta} for a proof)
$$
F^{(g)}[\delta](\Omega,b-a)=E(a,b)^2\Bigl(F^{(g)}[\delta](\Omega,0)\omega(a,b)
+\frac{1}{2}\sum_{i,j}^g\partial_i\partial_jF^{(g)}[\delta](\Omega,0)\omega_i(a)\omega_j(b)\Bigr)\
. $$ When $\Omega\in\J_g$  eq.\eqref{Fgzero} is equivalent to
$F^{(g)}[\delta](\Omega,0)=128\Xi^{(g)}[\delta](\Omega)$ that, by \eqref{dueventisei}, holds in both cases
$\Xi^{(g)}[\delta](\Omega)\equiv\Xi^{(g)}_{OPSMY}[\delta](\Omega)$ and $\Xi^{(g)}[\delta](\Omega)\equiv\Xi^{(g)}_{G}[\delta](\Omega)$.
It follows that
\begin{equation*}\begin{split}
A_2[\delta](a,b)=&128\,\Xi^{(g)}[\delta](\Omega)\,\omega(a,b)\\&+
\sum_{i,j}^g\omega_i(a)\omega_j(b)\Bigl(2\pi
i(C-B_5)(1+\delta_{ij})\frac{\partial
J^{(g)}}{\partial\Omega_{ij}}+\frac{1}{2}\partial_i\partial_jF^{(g)}[\delta](\Omega,0)\Bigr)\
.\end{split} \end{equation*} Notice that, by \eqref{relsch}, the
term $\frac{\partial J^{(g)}}{\partial\Omega_{ij}}$ vanishes for
$g\le 3$ but not for $g=4$.

\bigskip

It remains to compute
$\partial_i\partial_jF^{(g)}[\delta](\Omega,0)$. For a general
lattice $\Lambda$, set
\begin{equation*}\begin{split} F^{(g)}_\Lambda[\delta](\Omega,z):=
&\sum_{\substack{\lambda_1,\ldots,\lambda_g\\
\lambda_k\in\Lambda+u\frac{\delta'_k}{2}}}
\sum_{\tilde\lambda\cdot\tilde\lambda=2} e^{\pi i
\lambda_k\cdot\lambda_l\Omega_{kl} +2\pi i
\sum_k\lambda_k\cdot(\tilde\lambda z_k+ u \frac{\delta''_k}{2})}\
\end{split}\ ,\end{equation*} and, in particular,
$F_k[\delta](\Omega,z)\equiv F_{\Lambda_k}[\delta](\Omega,z)$,
$k=0,\ldots,7$, with $\Lambda_k$ listed in table \ref{t:lattices}.
Note that
$F_\Lambda[\delta](\Omega,0)=N_\Lambda\Theta_\Lambda[\delta](\Omega)$,
where $N_\Lambda$ is the number of vectors of norm $2$ in $\Lambda$.
Since, by \eqref{thserexpr},
$$
\frac{1}{2}(\Theta^{(g+1)}_k\thspin{\delta'&0}{\delta''&0}(\tilde\Omega)+\Theta^{(g+1)}_k\thspin{\delta'&0}{\delta''&1}(\tilde\Omega))=
\Theta^{(g)}_k[\delta](\Omega)+qF^{(g)}_k[\delta](\Omega,z)+O(q^2)\
,$$ we have
$$
F^{(g)}[\delta](\Omega,z)=\sum_{k=0}^7c^{g+1}_kF^{(g)}_k[\delta](\Omega,z)\
, $$ and
\begin{equation*}\begin{split}
\partial_i\partial_jF_\Lambda[\delta](\Omega,0)=
&\sum_{\substack{\lambda_1,\ldots,\lambda_g\\
\lambda_k\in\Lambda+u\frac{\delta'_k}{2}}}(2\pi i)^2
\sum_{\tilde\lambda\cdot\tilde\lambda=2}(\lambda_i\cdot\tilde\lambda)
(\tilde\lambda\cdot\lambda_j) e^{\pi i
\sum_{k,l}\lambda_k\cdot\lambda_l\Omega_{kl} +\pi i \sum_k\lambda_k\cdot u
\delta''_k}\end{split}\ . \end{equation*} In general, the lattice $\Lambda_k$ is
a direct sum $\Lambda_k=\tilde\Lambda_k\oplus\ZZ^{n_k}$, where
$\tilde\Lambda_k$ has no vectors of norm $1$ \cite{ConwaySloane}. It follows that the
set of vectors of norm $2$ in $\Lambda$ splits into a disjoint union
$$ \{\lambda\in\Lambda\mid \lambda\cdot\lambda=2\}=\{\lambda\in\tilde\Lambda\mid
\lambda\cdot\lambda=2\}\sqcup \{\lambda\in \ZZ^n\mid\lambda\cdot\lambda=2\}\
.$$ Hence,
$$
F_{\Lambda}[\delta](\Omega,z)=\Theta_{\tilde\Lambda}[\delta](\Omega)F_{\ZZ^n}[\delta](\Omega,z)
+\Theta_{\ZZ^n}[\delta](\Omega)F_{\tilde\Lambda}[\delta](\Omega,z)\
.$$ The vectors of norm $2$ in $\tilde\Lambda_k$ are the roots of
a semi-simple Lie algebra $\tilde\g_k$ (see table \ref{t:lattices}).
Let $\Delta$ be the set of roots of a simple Lie algebra of rank
$r$, a standard result is
$$\sum_{\alpha\in\Delta}\alpha\tp\alpha=l_\Delta\II_r\ ,
$$ where $l_\Delta$ is a constant depending on the Lie algebra.
This can be proved by noting that the matrix on the left hand side
is invariant under the action of the Weyl group, so that it must be
proportional to the identity. The constant $l_\Delta$ can be easily
computed by taking the trace of both sides
$$ \sum_{\alpha\in\Delta}\alpha\cdot \alpha=rl_\Delta\ .
$$ In the case of simply-laced algebras one obtains
$$ l_\Delta=\frac{2N}{r}\ ,
$$ where $N$ is the number of roots (more generally, $l_\Delta$ is
twice the dual Coxeter number of the algebra). The Lie algebra
$\tilde g_k$ associated to $\tilde\Lambda_k$ is either simple or the
sum of two copies of the same simple algebra, so that
$$ \sum_{\tilde\lambda\cdot\tilde\lambda=2}(\lambda_i\cdot\tilde\lambda)(\tilde\lambda\cdot\lambda_j)
=l_k\, \lambda_i\cdot\lambda_j\ ,$$ with $l_k$ given in table
\ref{t:lattices}. From this identity, one sees that $\tilde F_k:=
F_{\tilde\Lambda_k}$ satisfies an analog of the heat-kernel equation
\eqref{heat}
$$\partial_i\partial_j\tilde F_k[\delta](\Omega,0)=2\pi
i(1+\delta_{ij})l_k\frac{\partial}{\partial\Omega_{ij}}\tilde\Theta_k[\delta](\Omega)\
. $$ Furthermore, since $\theta[\delta](\Omega,z)$ is even in $z$,
one has
\begin{equation*}\begin{split}
F_{\ZZ^n}[\delta](\Omega,z)&=\sum_{\substack{\tilde\lambda\in\ZZ^n\\
\tilde\lambda\cdot\tilde\lambda=2}}\prod_{i=1}^n
\theta[\delta](\Omega,\tilde\lambda_i z)=
2n(n-1)\theta[\delta](\Omega,0)^{n-2}\theta[\delta](\Omega,z)^2\ ,
\end{split}\end{equation*}
and
$$ \partial_i\partial_jF_{\ZZ^n}[\delta](\Omega,0)=
4n(n-1)\theta[\delta](\Omega,0)^{n-1}\partial_i\partial_j\theta[\delta](\Omega,0)=
2\pi
i(1+\delta_{ij})(4n-4)\frac{\partial}{\partial\Omega_{ij}}\Theta_{\ZZ^n}[\delta](\Omega)
\ . $$ Using these results, one gets
$$
\partial_i\partial_jF_k^{(g)}[\delta](\Omega,0)=2\pi i(1+\delta_{ij})
l_k\frac{\partial\Theta_k^{(g)}[\delta](\Omega)}{\partial\Omega_{ij}}+(4n_k-4-l_k)n_k
\Theta_k^{(g)}[\delta](\Omega)\partial_i\partial_j\log\theta[\delta](\Omega,0)\
,
$$ so that
\begin{equation*}\begin{split}
\partial_i\partial_jF^{(g)}[\delta](\Omega,0)=&\sum_{k=0}^7c^{g+1}_k\partial_i\partial_jF_{k}^{(g)}[\delta](\Omega,0)\\=&2\pi
i(1+\delta_{ij})\frac{\partial}{\partial\Omega_{ij}}\Bigl(\sum_{k=0}^7s_k^g
\Theta_{k}^{(g)}[\delta](\Omega)\Bigr)-\Bigl(\sum_{k=0}^5t_k^{g}\Theta_{k}^{(g)}[\delta](\Omega)\Bigr)
\partial_i\partial_j\log\theta[\delta](\Omega,0)\end{split}\end{equation*} where
$$ s_k^g:=c_k^{g+1}l_k\ , \qquad\qquad t_k^g:=c_k^{g+1}n_k(l_k-4n_k+4)\ .
$$
By an explicit computation, one can verify that
$$ \sum_{k=0}^5\xi_k^is^g_k=0\ ,\quad i<g\ ,\qquad\qquad
\sum_{k=0}^5\xi_k^gs^g_k=32 \ , \qquad\qquad
\sum_{k=0}^5\xi_k^5s^4_k=152\ ,$$ for $g=2,3,4$. By \eqref{reltheta3} and
\eqref{reltheta4}, and noting that $l_6=l_7=60$, one obtains
$$ \sum_{k=0}^7s_k^g
\Theta_{k}^{(g)}[\delta](\Omega)=32\Xi^{(g)}_{OPSMY}[\delta](\Omega)+(152C-60B_5)J^{(g)}\
. $$ Analogously,
$$\sum_{k=0}^5\xi_k^it^g_k=0\ ,\quad i<g\ ,\qquad\qquad
\sum_{k=0}^5\xi_k^gt^g_k=256\ ,$$ so that
$$
\sum_{k=0}^5t_k^{g}\Theta_{k}^{(g)}[\delta](\Omega)=256\Xi^{(g)}[\delta](\Omega)\
,\qquad\qquad \Omega\in\J_g\ .$$ The final expression for the chiral
two-point function is \begin{subequations}
\begin{align}\label{twofinal} A_2[\delta]&(a,b)=128\,\Xi^{(g)}[\delta](\Omega)\,\omega(a,b)\notag 
+\sum_{i,j}^g\omega_i(a)\omega_j(b)
\Bigl[-\frac{256}{2}\Xi^{(g)}[\delta](\Omega)
\partial_i\partial_j\log\theta[\delta](\Omega,0)\\&+2\pi
i(1+\delta_{ij})\frac{\partial}{\partial\Omega_{ij}}\Bigl(\frac{32}{2}\Xi^{(g)}_{OPSMY}[\delta](\Omega)
+\frac{1}{2}(152C-60B_5)J^{(g)}+(C-B_5)J^{(g)}\Bigr) \Bigr]\\
\label{twofinal2}&\phantom{(a,b)} =128 \hat A_2[\delta](a,b)+\sum_{i,j}^g\omega_i(a)\omega_j(b)
\Bigl[-256\Xi^{(g)}[\delta](\Omega)
\partial_i\partial_j\log\theta[\delta](\Omega,0)\notag\\&\phantom{(a,b)}\quad+2\pi
i(1+\delta_{ij})\frac{\partial}{\partial\Omega_{ij}}\Bigl(16\Xi^{(g)}_{OPSMY}[\delta](\Omega)
+(77C-31B_5)J^{(g)}\Bigr) \Bigr] \ ,\end{align}\end{subequations}
where eq.\eqref{A2formula} has been used. We stress that, in the
last line of \eqref{twofinal} and \eqref{twofinal2},
$\Xi^{(g)}_{OPSMY}[\delta]$ has been used instead of
$\Xi^{(g)}_{G}[\delta]$. This is important for $g=4$, because the
difference is proportional to $J^{(4)}$, whose derivatives $\partial
J^{(4)}/\partial\Omega_{ij}$ in the directions transverse to the
Jacobian locus $\J_4$ are not zero. Such an issue does not arise for
$\Xi^{(g)}[\delta]$ on the first line of \eqref{twofinal} and
\eqref{twofinal2}, since
$\Xi^{(4)}_{OPSMY}[\delta]-\Xi^{(4)}_{G}[\delta]=0$ on $\J_4$.

\subsection{Vanishing of the two-point function}

For $g=2,3$, $J^{(g)}=0$ identically on $\Sieg_g$, so that
eq.\eqref{twofinal2} simplifies to
\begin{equation*}\begin{split} A_2[\delta](a,b)= 
&128\hat A_2[\delta](a,b)+16\cdot 2\pi
i(1+\delta_{ij})\sum_{i,j}^g\omega_i(a)\omega_j(b)\\
&\qquad\qquad\qquad\qquad\cdot\Bigl(\frac{\partial\Xi^{(g)}[\delta](\Omega)}{\partial\Omega_{ij}}
-16\Xi^{(g)}[\delta](\Omega)\frac{\partial}{\partial\Omega_{ij}}\log\theta[\delta](\Omega,0)\Bigr) \ .
\end{split}\end{equation*} This reproduces (up to an irrelevant factor) the ansatz
\eqref{faketwo} for the two-point function, plus a correction. After
summing over the spin structures, by \eqref{sumdelta} the correction
vanishes, so that, by \eqref{faketwog3},
\be\label{twog3}\sum_{\delta\text{ even}}A_2[\delta](a,b)=0\ ,\qquad
g=2,3\ , \ee as expected from space-time supersymmetry. Note,
however, that the meaning of this result is quite different from the
analogous results obtained so far in the literature. In fact, here
we have made no further assumptions on the form of the two-point
function, beyond the ansatz for the chiral measure and the natural
factorisation properties of string amplitudes. Furthermore, the fact
that the two-point function vanishes at genus $g$ is really a check
for the chiral measure at genus $g+1$ rather than $g$. For such
reasons, \eqref{twog3} is a strong argument supporting the ansatz
for the chiral measure at genus three and four.

We now directly show that the two-point function at $g=4$, implied by the OPSMY ansatz for the measure, does not vanish at $g=4$. To this end, it is convenient to
choose $\Xi^{(4)}_G[\delta]$ on the second line of \eqref{twofinal2},
so that, after summing over the even spin structures, we can use
\eqref{sumdelta}
to simplify this expression (note that the coefficient of the polar part is proportional to the cosmological constant and vanishes)
$$A_2(a,b)=2^3(2^4+1)\bigl(-8 D_4+ 16 B_4+77 C-31 B_5 \bigr)\sum_{i,j}^g\omega_i(a)\omega_j(b)2\pi
i(1+\delta_{ij})\frac{\partial J^{(4)}}{\partial\Omega_{ij}}\ , $$
and being
$$ -8 D_4+ 16 B_4+77 C-31 B_5=-\frac{2^{14}}{7\cdot 11\cdot 17}\neq
0\ ,$$ we conclude that the two-point function obtained by factorisation from the OPSMY ansatz at $g=5$ does not vanish.

\section{Conclusions}

The renewed recent interest in trying to solve long
standing questions in superstring theory, mainly due to basic papers by D'Hoker and Phong, led to a parallel deeper analysis of the structure of moduli space of Riemann surfaces
involving Riemann theta functions, Siegel modular forms and theta series associated to unimodular lattices.

In the present paper we have seen that a careful use of such
mathematical results, combined with the old idea of factorisation of
string and conformal field theory amplitudes under degeneration
limits of Riemann surfaces, provide powerful tools to analyse the
structure of superstring amplitudes that would be inaccessible to a
direct calculation. A key point that simplifies considerably the
computations, concerns the choice of the local coordinate at the
node on degenerate Riemann surfaces. Our techniques lead to several
advantages with respect to other approaches to the problem, which
were based on strong assumptions about the form of these amplitudes.
On one hand, one can obtain information on the connected part of the
$n$-point function at a certain genus, once the chiral superstring
measure is known at higher loop. This could lead to a major advance
in the so far prohibitive task of computing higher loop $n$-point
functions in the RNS formulation of superstrings. On the other hand,
one can use the
 non-renormalisation theorems for one-, two- and three-point
functions to check consistency of the chiral measure at higher
genus, without introducing any further assumptions.

We applied this procedure to obtain a general expression the (spin
dependent part of the) chiral two-point function for two NS massless
states on a surface of genus $g$, for every $g$, from factorisation of the chiral
measure at genus $g+1$. Then, we specialised our result
to the recent ans\"atze for the chiral superstring measure and
explicitly compute the two-point function up to genus $4$. We proved
that, after GSO projection, the two-point function vanishes at
$g=2,3$ as expected from space-time supersymmetry and, in
particular, that the connected and the disconnected part of the
amplitude vanish separately.

We also showed that the same result does not hold for the genus four
two-point function obtained from the OPSMY ansatz for the chiral
measure at genus five. In this case, the connected and disconnected
part, after summing over the spin structures, give the same
non-vanishing contribution up to a factor, but these contributions
do not cancel each other. This probably means that OPSMY ansatz has
to be modified. For such a reason, it would be very interesting to
understand whether Grushevsky expression for the chiral measure is
equivalent to OPSMY at genus five. If they are different, we can
conjecture that a certain linear combination of the two ans\"atze
exists, leading to a vanishing two-point function at genus four. If
this is the case, then the vanishing of the two-point function at
genus $g$ should be imposed as an additional constraint for the
chiral measure at genus $g+1$. On the contrary, if Grushevsky and
OPSMY expressions are equivalent, it would be interesting to
understand whether they are the unique solutions to the constraints.

Another direction for further investigation concerns the computation
of the three-point functions at genus $g$ by multiple factorisation
of the chiral measure at genus $g+2$. In this respect, it is
interesting to observe that the disconnected part of the three-point
function vanish at genus $g=2$ but not at genus $g=3$
\cite{Matone:2008td}. Because these amplitudes can be obtained from
multiple factorisation of the chiral measure at genera $g+2=4,5$, it
is tempting to conjecture that this is related to the vanishing of
the disconnected part of the two-point function at genera $g+1=3,4$,
respectively. Finally, our techniques could be checked by computing
the four-point function at genus two and comparing it with the
results of \cite{D'Hoker:2005jc}. All such computations involve,
however, multiple degenerations limits and are technically more
complicated.

\section*{Acknowledgements}

The research of R.V. is supported by
an INFN Fellowship.

\appendix

\section{Theta functions and Riemann surfaces}\label{s:mathback}

Here we first provides some background on theta functions
and Riemann surfaces (see \cite{Fay:1973,Mumford:1983,Farkas:1992} for proofs and
details). Next, we consider the degeneration of Riemann surfaces which is used in section three to derive the two-point function. We also
derive a basic formula for a section of $|2\Theta|$, with
$\Theta$ denoting the theta divisor.

\subsection{Definitions and basic results}\label{appendiceuno}

Let $\Sieg_g$ denote the Siegel upper half-space, i.e. the space of
$g\times g$ complex symmetric matrices with positive definite
imaginary part
$$\Sieg_g:=\{\Omega\in M_{g\times g}(\CC)\mid \tp\Omega=\Omega\,
,\im\Omega>0 \}\ .
$$
Let $\Sp(2g,\ZZ)$ be the symplectic modular group, i.e. the group of
$2g\times 2g$ complex matrices $M:=\bigl(\begin{smallmatrix}A &B\\
C & D\end{smallmatrix}\bigr)$, where $A,B,C,D$ are $g\times g$
blocks satisfying
$$\tp A C=\tp C A\ ,\quad \tp BD=\tp D B\ ,\quad \tp
D A-\tp B C=\II_g\ .
$$
Let us define the action of $\Sp(2g,\ZZ)$ on $\CC^g\times\Sieg_g$ by
\begin{equation}\label{modull}
(M\cdot z,M\cdot\Omega):=\bigl(\tp(C\Omega+D)^{-1}z,(A\Omega+B)(C\Omega+D)^{-1}\bigl)\ ,\end{equation} where $M:=\bigl(\begin{smallmatrix}A &B\\
C & D\end{smallmatrix}\bigr)\in\Sp(2g,\ZZ)$ and
$(z,\Omega)\in\CC^g\times\Sieg_g$.

 For each $\delta',\delta'' \in\FF_2^{g}$, the theta
function $\theta[\delta]:=\theta\thspin{\delta'}{\delta''}\colon
\CC^g\times\Sieg_g\to \CC$ with characteristics $[\delta]
:=\thspin{\delta'}{\delta''}$
 is defined by
$$\theta[\delta](z,\Omega):=\sum_{k\in\ZZ^g}
\exp{\pi
i\Bigl[\tp\Bigl(k+\frac{\delta'}{2}\Bigr)\Omega\Bigl(k+\frac{\delta'}{2}\Bigr)+
2\tp\Bigl(k+\frac{\delta'}{2}\Bigr)\Bigl(z+\frac{\delta''}{2}
\Bigr)\Bigr]\ ,}
$$
where $(z,\Omega)\in\CC^g\times\Sieg_g$. For each fixed $\Omega$,
$\theta[\delta](z,\Omega)$ is an even or odd function on $\CC^g$
depending whether $(-1)^{\delta'\cdot\delta''}$ is $+1$ or $-1$,
respectively. Correspondingly, there are $2^{g-1}(2^g+1)$ even and
$2^{g-1}(2^g-1)$ odd theta characteristics. Under translations
$z\mapsto z+\lambda$, $z\in\CC^g$, $\lambda
\in\ZZ^g+\Omega\ZZ^g\subset\CC^g$, theta functions get multiplied by
a nowhere vanishing factor
$$\theta \thspin{\delta'}{\delta''}\left(z+n+\Omega m,\Omega\right)=
e^{-\pi i \tp{m}\Omega m-2\pi i\tp{m}z+\pi
i(\tp{\delta'}n-\tp{\delta''}m)}\theta\thspin{\delta'}{\delta''}\left(z,\Omega\right)
\ ,
$$
$m,n\in\ZZ^g$. It follows that, for any fixed $\Omega$, the theta
functions can be seen as sections of line bundles on the complex
torus $A_\Omega:=\CC^g/(\ZZ^g+\Omega\ZZ^g)$, with a well defined
divisor on $A_\Omega$. We denote by $\Theta$ the divisor of
$\theta(z)=\theta[0](z,\Omega)=\theta\thspin{0}{0}(z,\Omega)$.

\medskip

The action of $\Sp(2g,\ZZ)$ on the space of theta characteristics $\FF_2^{2g}$ is defined by
 \begin{equation}
\label{modulcharac}[\delta\cdot M]=
\biggl[\begin{pmatrix}\delta' \\
\delta''\end{pmatrix}\cdot M\biggr]:=\begin{pmatrix}\tp A & \tp C\\ \tp B
&\tp D\end{pmatrix}\begin{bmatrix}\delta' \\
\delta''\end{bmatrix}+\begin{bmatrix} (\tp A
C)_0\\ (\tp BD)_0\end{bmatrix}\mod 2\ ,
\end{equation} where, for any matrix $A$, we denote by $A_0$ the vector of diagonal entries.
Theta characteristics are invariant under the action of the subgroup
$\Gamma(2)\subset\Sp(2g,\ZZ)$, where
$$\Gamma(n):=\{M\in\Sp(2g,\ZZ)\mid M=\II_{2g}\mod n\}\ ,
$$ is the subgroup of elements of $\Sp(2g,\ZZ)$ congruent to the $2g\times 2g$
identity matrix mod $n$. The theta characteristic $[0]:= \spin{0}{0}$ is fixed by the subgroup
$$\Gamma(1,2):=\{\left(\begin{smallmatrix}A & B\\ C& D\end{smallmatrix}\right)\in \Sp(2g,\ZZ)\mid
(\tp A C)_0\equiv (\tp BD)_0\equiv 0\bmod 2\}\ .$$
Symplectic
transformations preserve the parity of the characteristics and, for
any two $\delta,\epsilon\in\ZZ_2^{2g}$ of the same parity, there
exists $M\in\Sp(2g,\ZZ_2)$ such that $\epsilon=M\cdot\delta$.

A (Siegel) modular form $f$ of weight $k\in\ZZ$ for a subgroup
$\Gamma\in\Sp(2g,\ZZ)$ is a holomorphic function on $\Sieg_g$ such, for all $M\in\Gamma$,
that
$$f(M\cdot \Omega)=\det(C\Omega+D)^kf(\Omega)\ .
$$
A condition of regularity, automatically satisfied for $g>1$, is also required for
$g=1$.

Let $C$ be a Riemann surface of genus $g>1$. The choice of a marking
for $C$ provides a set of generators
$\{\alpha_1,\ldots,\alpha_g,\beta_1,\ldots,\beta_g\}$ for the first
homology group $H_1(C,\ZZ)$ on $C$, with symplectic intersection
matrix, that is
\begin{equation}\label{sympl}\alpha_i\cdot\alpha_j=0=\beta_i\cdot\beta_j\ ,\qquad
\alpha_i\cdot\beta_j=\delta_{ij}\ ,
\end{equation}
for all $i,j=1,\ldots,g$. The choice of such generators canonically
determines a basis $\{\omega_1,\ldots,\omega_g\}$ for the space
$H^0(K_C)$ of holomorphic $1$-differentials on $C$, with normalised
$\alpha$-periods
$\oint_{\alpha_i}\omega_j=\delta_{ij}$,
for all $i,j=1,\ldots,g$. The $\beta$-periods define the Riemann
period matrix
$\Omega_{ij}:=\oint_{\beta_i}\omega_j$,
which is symmetric and with positive-definite imaginary part, so
that $\Omega\in\Sieg_g$. By Torelli's theorem, the complex structure
of $C$ is completely determined by its Riemann period matrix.

The conditions \eqref{sympl} determine the basis
of $H_1(C,\ZZ)$ up to a symplectic transformation
$$ 
\begin{pmatrix}\alpha\\
\beta\end{pmatrix}\mapsto 
\begin{pmatrix}\tilde\alpha\\
\tilde\beta\end{pmatrix}:=\begin{pmatrix}D & C \\ B &
A\end{pmatrix} \begin{pmatrix}\alpha\\
\beta\end{pmatrix}\ ,\qquad\qquad
M=\begin{pmatrix}A & B
\\ C & D\end{pmatrix}\in \Sp(2g,\ZZ) \ , $$ under which
$
(\omega_1,\ldots,\omega_g)\mapsto(\tilde\omega_1,\ldots,\tilde\omega_g):=(\omega_1,\ldots,\omega_g)(C\Omega+D)^{-1}$,
 whereas $\Omega\mapsto\tilde\Omega:=M\cdot\Omega$
transforms as in \eqref{modull}.

The complex torus $A_C:=\CC^g/(\ZZ^g+\Omega\ZZ^g)$ associated to the
Riemann period matrix of $C$ is called the Jacobian torus of $C$.
For a fixed base-point $p_0\in C$, let $I\colon C\to A_C$ denote the
Abel-Jacobi map, defined by
$$p\mapsto
I(p):=\tp\Bigl(\int_{p_0}^p\omega_1,\ldots,\int_{p_0}^p\omega_g\Bigr)\in
A_C\ .
$$
Note that different choices of the path of integration from $p_0$ to
$p$ correspond, by the formula above, to points in $\CC^g$ differing
by elements in the lattice $\ZZ^g+\Omega\ZZ^g$, so that $I$ is
well-defined only on $\CC^g/(\ZZ^g+\Omega\ZZ^g)$. The Abel-Jacobi
map extends to a map from the Abelian group of divisors on $C$ to
$A_C$ by
$$I\bigl(\smsum\nolimits_i p_i-\smsum\nolimits_i q_i\bigl):=\smsum\nolimits_i I(p_i)-\smsum\nolimits_i I(q_i)\ .$$
Such a map is independent of the base-point $p_0$ when restricted to
zero degree divisors. When no confusion is possible, we will
identify such zero degree divisors with their image in $A_C$ through
$I$. In particular, we will omit both $I$ when considering the theta
functions on the Jacobian evaluated at (the image of) some zero
degree divisor on $C$ and the argument $\Omega$ for theta
functions associated to a marked Riemann surface.

Fix a non-singular odd theta characteristic $\nu\in\FF_2^{2g}$ and
consider $\sum_{i=1}^g\partial_i\theta[\nu](0)\omega_i$,
which is a holomorphic $1$-differential with $g-1$ double
zeroes and is the square $h_\nu^2$ of a holomorphic
$1/2$-differential with odd spin structure $\nu$. This differential defines
the prime form
\be 
E(a,b):=\frac{\theta[\nu](b-a)}{h_\nu(a)h_\nu(b)}\ ,
\ee $a,b\in C$, which is a section of a line bundle on $C\times C$,
antisymmetric in its arguments, vanishing only on the
diagonal $a=b$ and independent of the choice of
$\nu$.

For each non-singular even characteristic $\delta\in\FF_2^{2g}$, the
Szeg\"o kernel is the meromorphic $1/2$-differential
$$S_\delta(a,b):=\frac{\theta[\delta](a-b)}
{\theta[\delta](0)\,E(a,b)} \ ,
$$
with a single pole at $a=b$ and
holomorphic elsewhere.

Finally, we denote by
\be\label{omegathird}\omega_{a-b}(x):=\frac{d}{d
x}\log\frac{E(x,a)}{E(x,b)}\ ,\ee $a,b,x\in C$, the Abelian
$1$-differential of the third kind with single poles on $a$ and $b$
with residue $+1$ and $-1$, respectively, holomorphic elsewhere and
with vanishing $\alpha$-periods, and with
\be\label{omegasecond}\omega(a,b):=\frac{d^2}{d
a\, db}\log E(a,b)\ ,\ee the Abelian $1$-differential of the second
kind with a double pole of residue $1$ at $a=b$, holomorphic
elsewhere and with vanishing $\alpha$-periods.

\subsection{Degeneration formulae}\label{a:degen}

Here we derive the degeneration formulae for the
Riemann period matrix. A key point concerns the local coordinate at the node of degenerate Riemann
surfaces whose choice leads to a considerable simplification of the calculations to derive the two-point function from the chiral measure.

Consider two distinct points
$p_1,p_2\in C$ and let $z_1,z_2$ be local coordinates
$$ z_{i}:\{z\in\CC\mid |z|<1\}\to U_{i}\subset C\ , \qquad z_i(p_i)=0\
,\qquad\qquad i=1,2\ ,$$ centered at $p_1$ and
$p_2$, respectively.
 Then, a family
$$ \{\tilde C_t\mid t\in\CC, 0<|t|<1\} \ ,
$$ of Riemann surfaces of genus $g+1$ is defined, where
$$ \tilde C_t=C\setminus (U_{1,t}\cup U_{2,t})\ ,
$$ with $U_{i,t}:=\{p\in U_i\mid |z_i(p)|\le |t|\}$, $i=1,2$, and two
points $p\in U_1\setminus U_{1,t}$ and $q\in U_2\setminus U_{2,t}$
are identified if
$$ z_1(p)z_2(q)=t\ .
$$ Let $\alpha_1,\ldots,\alpha_g,\beta_1,\ldots,\beta_g$ be a symplectic basis for the homology
of $C$, with representatives in $C\setminus (U_1\cup U_2)$ and $\omega_1,\ldots,\omega_g$ the basis of canonically normalised abelian differentials.
We can
choose a basis
$\tilde\alpha_1(t),\ldots,\tilde\alpha_{g+1}(t),\tilde\beta_1(t),\ldots,\tilde\beta_{g+1}(t)$
of $H_1(\tilde C_t,\ZZ)$ such that
$$\tilde\alpha_i(t)=\alpha_i\ , \qquad \tilde\beta_i(t)=\beta_i\ ,\qquad
i=1,\ldots, g\ . $$ As representatives of $\tilde\alpha_{g+1}(t)$ and
$\tilde\beta_{g+1}(t)$ we can consider, respectively, the circle
$|z_1|=\sqrt{|t|}$ and a suitable path on $C$ from $z_1^{-1}(x)$ to
 $z^{-1}_2(t/x)$ for some $x\in\CC$, $|t|<|x|<1$. Then, the Riemann period matrix
$\tilde\Omega(t)$ of $\tilde C_t$ with respect to this basis is
\cite{Fay:1973,Yamada:1980}
$$ \tilde\Omega(t)=\begin{pmatrix}
\Omega_{ij}+2\pi it\sigma_{ij} & \int_{p_1}^{p_2}\omega_i+t\sigma_i\\
\int_{p_1}^{p_2}\omega_j+t\sigma_j & \frac{1}{2\pi i}\log t
+c_0+c_1t
\end{pmatrix}+O(t^2)\ ,$$
where
\begin{align}
\sigma_{ij}&=-\frac{\omega_i(p_1)\omega_j(p_2)+\omega_i(p_2)\omega_j(p_1)}{dz_1(p_1)\,dz_2(p_2)}\ , &
\sigma_{i}&=-\Bigl(\gamma_1\frac{\omega_i}{dz_2}(p_2)+\gamma_2\frac{\omega_i}{dz_1}(p_1)\Bigr)\
,\\[5pt]
\label{czerouno} c_0&=\frac{1}{2\pi i}\lim_{x\to
0}\Bigl(\int_{z_1^{-1}(x)}^{z_2^{-1}(x)}\omega_{p_2-p_1}-2\log
x\Bigr)\ ,& c_1&=\frac{i}{\pi}\gamma_1\gamma_2\ ,
\end{align}
where $\omega_{p_2-p_1}$ is the $1$-differential of the third kind on $C$
defined in \eqref{omegathird}, and
\be\label{gammai} \gamma_i=\lim_{x\to
p_i}\Bigl(\omega_{p_1-p_2}(x)-(-1)^i\frac{dz_i}{z_i}(x)\Bigr)\
,\qquad i=1,2\ .\ee All these parameters can be exactly computed for
a suitable choice of the coordinates $z_1$ and $z_2$. Choose $2g$
curves on $C$ which are representatives of the basis of homology and
consider the canonical dissection of $C$ along these curves. We can
identify $C$ with a fundamental domain $\hat C$ in the upper
half-plane $\mathbb{H}$ with respect to the Fuchsian uniformisation
on $C$. Let us choose such a dissection so that $p_1,p_2$ and the
paths $\tilde\alpha_{g+1}$ and $\tilde\beta_{g+1}$ lie in the interior of
$\hat C$. Fix an arbitrary point $c\in \hat C$, distinct from
$p_1,p_2$, and set
$$ z_1(p):=\frac{E(p,p_1)E(c,p_2)}{E(p,p_2)E(c,p_1)}=
e^{\int_c^p\omega_{p_1-p_2}}\ ,\qquad
z_2(q):=\frac{E(q,p_2)E(c,p_1)}{E(q,p_1)E(c,p_2)}=
e^{\int_c^q\omega_{p_2-p_1}}\ . $$ These coordinates, that
represent the higher genus generalisations of cross-ratios on the
sphere, satisfy the following
properties
\begin{align*}dz_1(p)&=\omega_{p_1-p_2}(p)z_1(p)\ ,&\qquad
dz_1(p_1)&=\frac{E(c,p_2)}{E(p_2,p_1)E(c,p_1)}\
,\\dz_2(q)&=\omega_{p_2-p_1}(q)z_2(q)\ ,&\qquad
dz_2(p_2)&=\frac{E(c,p_1)}{E(p_1,p_2)E(c,p_2)}\ ,\end{align*}
where $p,q$ are distinct from $p_1,p_2$.
Replacing these expressions in \eqref{gammai}, it follows
immediately that $\gamma_i=0$ in such coordinates. Furthermore, if
in \eqref{czerouno} we choose a path from $z_1^{-1}(x)$ to
$z_2^{-1}(x)$ in $\hat C$ passing through $c$, we obtain
$$
\int_{z_1^{-1}(x)}^{z_2^{-1}(x)}\omega_{p_2-p_1}=\int^{z_1^{-1}(x)}_{c}\omega_{p_1-p_2}+
\int_{c}^{z_2^{-1}(x)}\omega_{p_2-p_1}=\int_1^x
\frac{dz_1}{z_1}+\int_1^x \frac{dz_2}{z_2}=2\log x\ ,$$ where we
used $z_1(c)=1=z_2(c)$. It follows that $c_0=0$ and
 and we finally obtain
$$\tilde\Omega(t)=\begin{pmatrix}
\Omega_{ij}+2\pi it\sigma_{ij} & \int_{p_1}^{p_2}\omega_i\\
\int_{p_1}^{p_2}\omega_j & \frac{1}{2\pi i}\log t
\end{pmatrix}+O(t^2)\ ,$$
where
$$
\sigma_{ij}=-\frac{\omega_i(p_1)\omega_j(p_2)+\omega_i(p_2)\omega_j(p_1)}{dz_1(p_1)\,dz_2(p_2)}=
E(p_1,p_2)^2(\omega_i(p_1)\omega_j(p_2)+\omega_i(p_2)\omega_j(p_1))\
. $$

\subsection{A formula for the sections of $|2\Theta|$}\label{a:twotheta}

Fix an element $\Omega\in\Sieg_g$ an consider the complex torus
$A_\Omega:=\CC^g/(\ZZ^g+\Omega\ZZ^g)$. A section of $|k\Theta|$ on $A_\Omega$, $k\in\NN$,
corresponds to a holomorphic function $F$ on $\CC^g$ obeying the
quasi-periodicity conditions
$$F(z+\Omega m+n)=e^{-k(\pi i \tp{m}\Omega m+2\pi i\tp{m}z)}F(z)\
,\qquad m,n\in\ZZ^g\ .$$ In the following we will be interested in
the space $H^0(A_\Omega,|2\Theta|)$ of sections of $|2\Theta|$,
which is spanned by the squares $\theta[\delta]^2(z)$ of theta
functions with characteristics. In particular, all such sections are
even functions of $z$. Let $\Omega$ be the period matrix of a
Riemann surface $C$ and $A_C$ its Jacobian, and consider the
restriction of a section $F\in H^0(A_C,|2\Theta|)$  to the locus
$$\textstyle{C-C:=\{(b-a):= (\int_a^b\omega_1,\ldots,\int_a^b\omega_g)\mid a,b\in C\}\subseteq A_C\
.} $$ It is easy to see that
$$
\frac{F(b-a)}{E(a,b)^2}=\frac{F(b-a)}{\theta[\nu]^2(b-a)}h_\nu^2(a)h_\nu^2(b)\
, $$ is a single-valued meromorphic $1$-differential with respect to
$(a,b)\in C\times C$, with a double pole on the diagonal $a=b$ and
holomorphic elsewhere. The space of such differentials is generated
by $\{\omega_i(a)\omega_j(b)\}_{i,j=1,\ldots,g}$ and by the
normalised differential of the second kind $\omega(a,b)$ defined in
\eqref{omegasecond}, so that \be\label{duetheta}
F(b-a)=E(a,b)^2\Bigl(c_0\omega(a,b)+\sum_{i,j}^gc_{ij}\omega_i(a)\omega_j(b)\Bigr)\
. \ee To compute the coefficients $c_0,c_{ij}$, let us compare the
expansion of both sides of \eqref{duetheta} in the limit of $b\to
a$. Since $F$ is even, we have
$$F(b-a)=F(0)+\frac{1}{2}(b-a)^2\sum_{i,j}^g\partial_i\partial_j
F(0)\omega_i(a)\omega_j(a)+O(b-a)^4\ .$$  On the other hand
\cite{Fay:1973},
\begin{align*}da^{1/2}\,db^{1/2}\, E(a,b)&=(b-a)-\frac{1}{12}S(a)(b-a)^3+O(b-a)^5\ ,\\
\omega(a,b)&=da\,db((b-a)^{-2}+\frac{1}{6}S(a)+O(b-a)^2)\ ,
\end{align*} with $S(a)$ a holomorphic projective connection, so that the right hand side of \eqref{duetheta} becomes
$$c_0+(b-a)^2\sum_{i,j}^gc_{ij}\omega_i(a)\omega_j(a)+O(b-a)^4\ .
$$ It follows that
$$ F(b-a)=E(a,b)^2\Bigl(F(0)\omega(a,b)+\frac{1}{2}\sum_{i,j}^g\partial_i\partial_j
F(0)\omega_i(a)\omega_j(b)\Bigr)\ .$$ Note that, since
$\theta[\delta]^2(z)\in H^0(A_C,|2\Theta|)$, this relation implies
eq.\eqref{questa}.

\section{Theta series and lattices}\label{a:thetaseries}

We collect here some useful results about unimodular
lattices and their theta series.

\subsection{Proof of formula \eqref{thserexpr}}

To each even theta characteristic $\delta:=\thspin{\delta'}{\delta''}\in\FF_2^{2g}$ associate an element $M_\delta\in \Sp(2g,\ZZ)$ such that
\be\label{modulzero} [0\cdot M_\delta]:=\begin{bmatrix}
(\tp AC)_0\\ (\tp B D)_0
\end{bmatrix}=[\delta]\ .\ee
In particular, we can choose
\be\label{emmedelta} M_\delta=\begin{pmatrix}
\diag(\delta') & -\II_g\\ \II_g & 0
\end{pmatrix}\begin{pmatrix}
\II_g & S\\ 0 & \II_g
\end{pmatrix}=\begin{pmatrix}
\diag(\delta') & \diag(\delta')S-\II_g\\ \II_g & S
\end{pmatrix}\ , \ee
where, for any vector $v=(v_1,\ldots,v_g)$, and $\diag(v)$ denotes the diagonal matrix with $\diag(v)_{ii}=v_i$.
Here, $S$ is an integer $g\times g$ matrix satisfying
$$ \delta''=S\delta'+S_0\ ,\qquad S_0\cdot \delta'=0\ .
$$ For example, if
$$ \spin{\delta'}{\delta''}= 
\spin{1 & 1 & 1 & 1 & 1 & 1 & 1 & 0 & 0 & 0}{
1 & 1 & 1 & 1 & 0 & 0 & 0 & 1 & 1 & 0}
\ ,$$
then
$$ S=\left(\begin{smallmatrix}
0 & 1 & \\
1 & 0 & \\
 && 0 & 1 & \\
 && 1 & 0 &\\
 &&&& 0\\
 &&&&& 0\\
 &&&&&& 0\\
 &&&&&&& 1\\
 &&&&&&&& 1\\
 &&&&&&&&& 0
\end{smallmatrix}\right)\ , $$ and this construction can be easily generalised
to every even $\delta$. Note that
$$
M_\delta\cdot\Omega=(\diag(\delta')(\Omega+S)-\II)(\Omega+S)^{-1}=\diag(\delta')+\hat\Omega\
, $$ where
$$ \hat\Omega=-(\Omega+S)^{-1}\ .
$$
Let $\Lambda$ be a $d$-dimensional unimodular lattice, with $d\equiv 0\mod 8$.
Choose a basis $\lambda^{(1)},\ldots,\lambda^{(d)}\in\RR^{d}$ of generators of $\Lambda$ and let $E$ be
the $d\times d$ matrix whose $i$-th column is the vector $\lambda^{(i)}$, $i=1,\ldots,d$, and $Q$ the Gram matrix
$$ E:=(\lambda^{(1)},\ldots,\lambda^{(d)})\ ,\qquad\qquad Q_{ij}:=\lambda^{(i)}\cdot\lambda^{(j)}\ .
$$
Then, by construction, $Q=\tp EE$ is an integral unimodular matrix.
Let $\Theta_{\Lambda}$ be the theta series \eqref{thetaseries} of $\Lambda$ and, as in \eqref{defseries}, set
$$
\Theta_\Lambda[\delta](\Omega)=(\det\hat\Omega)^{d/2}\Theta_\Lambda(\diag(\delta')+\hat\Omega)\
. $$ Let us define $r''\in(\ZZ/2\ZZ)^{d}$ and $u\in\Lambda$ by
$$ r''\equiv Q_0\bmod 2\ , \qquad\qquad u:=EQ^{-1}r''= \tp E^{-1} r''\ ,
$$ and notice that for any $\lambda=En\in\Lambda$, $n\in\ZZ^{d}$,
we have
$$ \lambda\cdot\lambda=\tp n Q n=\sum_i^d
n_i^2Q_{ii}+2\sum_{i<j}^dn_iQ_{ij}n_j\equiv\sum_i^d n_iQ_{ii}\mod 2\ , $$
so that \be\label{parityvec} \lambda\cdot\lambda \equiv n\cdot
r''\equiv \lambda\cdot u\mod 2\ ,\qquad \text{for all }\lambda\in\Lambda\ .\ee A vector $u\in\Lambda$ satisfying this property is called a parity (or characteristic) vector for $\Lambda$. Thus $$
\Theta_\Lambda(\diag(\delta')+\hat\Omega)=\sum_{\lambda_1,\ldots,\lambda_g\in\Lambda}e^{\pi
i\sum_{i,j}\lambda_i\cdot\lambda_j\hat\Omega_{ij}+\pi
i\sum_i\delta'_i\lambda_i\cdot\lambda_i}=\sum_{\lambda_1,\ldots,\lambda_g\in\Lambda}e^{\pi
i\sum_{i,j}\lambda_i\cdot\lambda_j\hat\Omega_{ij}+\pi
i\sum_i\delta'_i\lambda_i\cdot u}\ ,$$ that can be rewritten as
$$ \Theta_\Lambda(\diag(\delta')+\hat\Omega)=\sum_{n_1,\ldots,n_g\in\ZZ^{d}}e^{\pi
i\sum_{i,j}(\tp n_i Qn_j)\hat\Omega_{ij}+2\pi
i\sum_i\frac{\delta'_i}{2} n_i\cdot r''}\ ,$$ so that making a Poisson
resummation with respect to $(n_1,\ldots,n_g)\in\ZZ^{g\times d}$ becomes
$$ \Theta_\Lambda(\diag(\delta')+\hat\Omega)=(\det\hat\Omega)^{-d/2}
\sum_{m_1,\ldots,m_g\in\ZZ^{d}}e^{\pi i\sum_{i,j}\tp
(m_i+\frac{\delta'_i}{2}r'')Q^{-1}(m_j+\frac{\delta'_j}{2}r'')(\Omega_{ij}+S_{ij})}\
. $$ Set
$$ n_i=Q^{-1}m_i\ ,\qquad\qquad r'=Q^{-1}r''\ ,
$$ and use $Q^{-1}\in  {\rm GL}(d,\ZZ)$ and $u=Er'$ to
obtain
\begin{equation*}\begin{split} \Theta_\Lambda[\delta](\Omega)=&\sum_{n_1,\ldots,n_g\in\ZZ^{d}}e^{\pi
i\sum_{i,j}\tp(n_i+\frac{\delta'_i}{2}r')Q(n_j+\frac{\delta'_j}{2}r')(\Omega_{ij}+S_{ij})}\\=&
\sum_{\lambda_1,\ldots,\lambda_g\in\Lambda}e^{\pi
i\sum_{i,j}(\lambda_i+\frac{\delta'_i}{2}u)\cdot(\lambda_j+\frac{\delta'_j}{2}u)(\Omega_{ij}+S_{ij})}\
. \end{split}\end{equation*} Observe that
$$ \sum_{i,j}\lambda_i\cdot\lambda_jS_{ij}\equiv \sum_i
\lambda_i\cdot\lambda_iS_{ii}\equiv  \sum_i \lambda_i\cdot
uS_{ii}\mod 2\  ,$$ because $S$ is integral and symmetric.
Furthermore, any parity vector $u$ satisfies \cite{Elkies1995}
$$ u\cdot u\equiv d\bmod 8\equiv 0\bmod 8\ ,
$$ (for $d=16$ this can also be checked by an explicit case by case calculation), so
that the following congruences mod $2$ hold
\begin{equation*}\begin{split}
\sum_{i,j}&(\lambda_i+\frac{\delta'_i}{2}u)\cdot(\lambda_j+\frac{\delta'_j}{2}u)S_{ij}\equiv
\sum_i \lambda_i\cdot u((S\delta')_i+(S_0)_i)+\sum_{i,j}\frac{u\cdot
u}{4}\delta'_iS_{ij}\delta'_j\\
&\equiv \sum_i \lambda_i\cdot u\delta''_i\equiv \sum_i
(\lambda_i+\frac{\delta'_i}{2}u)\cdot u\delta''_i\mod
2\ , \end{split}\end{equation*} and \eqref{thserexpr} follows.
Note that the set of parity vectors, i.e. the
vectors in $\Lambda$ satisfying \eqref{parityvec}, is given by
$u+2\Lambda$,  and \eqref{thserexpr} does not change if we replace
$u$ by an arbitrary $\tilde u\in u+2\Lambda$. Also note that $\tilde
u\cdot\tilde u\equiv u\cdot u\mod 8$ (this property holds for
unimodular lattices of any dimension \cite{Elkies1995}). The definition
\eqref{thserexpr} makes sense also for $\delta$ an odd theta
characteristic, but in this case there is no $M\in \Sp(2g,\ZZ)$ satisfying \eqref{modulzero}.

\subsection{Sums over spin structures}

Let $\Lambda$ be an odd unimodular lattice and $\Lambda_e\subset\Lambda$ the sublattice of vectors of
even norm, so that
$\Lambda_e\subset\Lambda\subset\Lambda_e^*$. If $u\in\Lambda$ is a
parity vector, i.e. satisfies \eqref{parityvec}, then
$u/2\in\Lambda_e^*$ and it maps to a non-trivial element of
$\Lambda_e^*/\Lambda\cong\ZZ_2$. Let $\lambda_o\in\Lambda$ be an
arbitrary vector of odd norm and $\Lambda_{o}=\lambda_o+\Lambda_e$
the set of vectors of odd norm, so that
$\Lambda=\Lambda_e\cup\Lambda_o$. We have the decomposition
$$
\Lambda_e^*=\Lambda\cup(\frac{u}{2}+\Lambda)=
\Lambda_e\cup\Lambda_o\cup(\frac{u}{2}+\Lambda_e)\cup(\frac{u}{2}+\Lambda_o)\
. $$ Set \be\label{la1la2}
\Lambda^{(1)}:=\Lambda_e\cup(\frac{u}{2}+\Lambda_e)\ ,\qquad
\Lambda^{(2)}:=\Lambda_e\cup(\frac{u}{2}+\Lambda_o)\ .\ee
\begin{proposition}
If $\Lambda$ is a $d$-dimensional unimodular lattice, with $d\equiv
0\mod 8$, then $\Lambda^{(1)}$ and $\Lambda^{(2)}$ are
$d$-dimensional even unimodular lattices.
\end{proposition}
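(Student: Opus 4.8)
The plan is to verify, for each of $\Lambda^{(1)}$ and $\Lambda^{(2)}$, the three defining properties of an even unimodular lattice — that it is a full-rank lattice, that it is even, and that it is self-dual — exploiting throughout that $u$ is a parity vector and that $u\cdot u\equiv d\equiv 0\bmod 8$. First I would record the two elementary consequences of the latter congruence: the vector $u$ has even norm, so $u\in\Lambda_e$, and $\tfrac{u\cdot u}{4}$ is an \emph{even} integer. With $u\in\Lambda_e$ one checks that $\Lambda^{(1)}=\Lambda_e\cup(\tfrac u2+\Lambda_e)=\Lambda_e+\ZZ\tfrac u2$ is a subgroup, since $2\cdot\tfrac u2=u\in\Lambda_e$. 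Writing $w:=\tfrac u2+\lambda_o$ one has likewise $\Lambda^{(2)}=\Lambda_e\cup(w+\Lambda_e)=\Lambda_e+\ZZ w$, which is a subgroup because $2w=u+2\lambda_o\in\Lambda_e$ (both summands being of even norm). Since the four cosets $\Lambda_e,\Lambda_o,\tfrac u2+\Lambda_e,\tfrac u2+\Lambda_o$ are the distinct elements of $\Lambda_e^*/\Lambda_e\cong\ZZ_2\times\ZZ_2$, each $\Lambda^{(i)}$ is an index-two overlattice of the full-rank lattice $\Lambda_e$, hence a lattice of rank $d$.

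The main step is evenness, which I would establish from the parity-vector property $u\cdot\lambda\equiv\lambda\cdot\lambda\bmod 2$. Expanding the norm of a generic half-shifted vector $\tfrac u2+\mu$ gives
$$\Bigl(\tfrac u2+\mu\Bigr)\cdot\Bigl(\tfrac u2+\mu\Bigr)=\tfrac{u\cdot u}{4}+u\cdot\mu+\mu\cdot\mu\ ,$$
where $\tfrac{u\cdot u}{4}$ is even and $u\cdot\mu\equiv\mu\cdot\mu\bmod 2$. For $\Lambda^{(1)}$ one has $\mu\in\Lambda_e$, so $\mu\cdot\mu$ and hence $u\cdot\mu$ are both even and the total is even. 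For $\Lambda^{(2)}$ the added coset is $\tfrac u2+\Lambda_o$, where $\mu\cdot\mu$ is odd; then $u\cdot\mu$ is also odd, the two odd contributions cancel mod $2$, and the norm is again even. Thus every vector of $\Lambda^{(1)}$ and of $\Lambda^{(2)}$ has even norm.

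Evenness then yields integrality for free, via the polarisation identity $x\cdot y=\tfrac12\bigl((x+y)\cdot(x+y)-x\cdot x-y\cdot y\bigr)$, all three norms on the right being even integers. For unimodularity I would argue by a determinant count: since $\Lambda$ is unimodular and $\Lambda_e\subset\Lambda$ has index two, the Gram determinant scales as $\det\Lambda_e=2^2\det\Lambda=4$; and since each $\Lambda^{(i)}$ contains $\Lambda_e$ with index two, $\det\Lambda^{(i)}=\det\Lambda_e/2^2=1$. An integral lattice of determinant one coincides with its dual, so $\Lambda^{(1)}$ and $\Lambda^{(2)}$ are unimodular, and being also even of rank $d$ they are $d$-dimensional even unimodular lattices.

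The only delicate point is the evenness verification, and it is precisely there that the hypothesis $u\cdot u\equiv 0\bmod 8$ is indispensable: had one instead $u\cdot u\equiv 4\bmod 8$, the term $\tfrac{u\cdot u}{4}$ would be odd and the half-shifted coset $\tfrac u2+\Lambda_e$ would contain vectors of odd norm, destroying the evenness of $\Lambda^{(1)}$. For $16$-dimensional $\Lambda$ this congruence can of course be confirmed lattice by lattice, so the hypothesis is genuinely met in all cases of interest.
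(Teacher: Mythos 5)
Your proof is correct, and its core --- establishing evenness of the half-shifted cosets by expanding $(\tfrac u2+\mu)\cdot(\tfrac u2+\mu)=\tfrac{u\cdot u}{4}+u\cdot\mu+\mu\cdot\mu$ and invoking the parity property together with $u\cdot u\equiv d\equiv 0\bmod 8$ --- is exactly the computation the paper performs. Where you genuinely diverge is the self-duality step. The paper argues by inclusion: from $\Lambda_e\subset\Lambda^{(i)}$ it gets ${\Lambda^{(i)}}^*\subset\Lambda_e^*$, and then rules out the two complementary cosets of $\Lambda_e^*=\Lambda_e\cup\Lambda_o\cup(\tfrac u2+\Lambda_e)\cup(\tfrac u2+\Lambda_o)$ by exhibiting half-integer inner products such as $\lambda_o\cdot\tfrac u2\in\tfrac12+\ZZ$, so that ${\Lambda^{(i)}}^*\cap(\Lambda_e^*\setminus\Lambda^{(i)})=\emptyset$. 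You instead use the discriminant--index formula: $\det\Lambda_e=[\Lambda:\Lambda_e]^2\det\Lambda=4$ and $\det\Lambda^{(i)}=\det\Lambda_e/[\Lambda^{(i)}:\Lambda_e]^2=1$, and an integral lattice of determinant one is self-dual. Your route is shorter and more mechanical, and it cleanly separates integrality (which you get for free from evenness via polarisation) from unimodularity; the paper's route is more explicit and, as a by-product, identifies precisely which cosets of $\Lambda_e^*$ fail to pair integrally with $\Lambda^{(i)}$ --- information that is reused later when matching $\Lambda^{(1)},\Lambda^{(2)}$ with $E_8^2$ and $D_{16}^+$. Both arguments rely on the same unstated facts (that $\Lambda$ is odd, so $\tfrac u2\notin\Lambda$ and the four cosets are distinct), which you flag no less carefully than the paper does.
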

\begin{proof}
For $d$-dimensional unimodular lattices, the norm of a parity vector
satisfies $u\cdot u\equiv d\mod 8$ \cite{Elkies1995}. It follows that, for
$d\equiv0\bmod 8$,
$$ \frac{u}{2}\cdot \frac{u}{2}=\frac{u\cdot u}{4}\in 2\ZZ\ ,\qquad
(\frac{u}{2}+\lambda_o)\cdot (\frac{u}{2}+\lambda_o)=\frac{u\cdot
u}{4}+u\cdot \lambda_o+\lambda_o\cdot\lambda_o\in 2\ZZ\ .$$ In
particular, $u$ and $u+2\lambda_o$ are elements of $\Lambda_e$, so
that $\Lambda^{(1)}$ and $\Lambda^{(2)}$ are closed under the sum.
Furthermore, they are integral
($\Lambda^{(i)}\subseteq{\Lambda^{(i)}}^*$) and even. To prove that
they are self-dual, first observe that
${\Lambda^{(i)}}^*\subset\Lambda_e^*$ because
$\Lambda_e\subset\Lambda^{(i)}$. Since
$$ \lambda_o\cdot\frac{u}{2}\ ,\quad\quad
\lambda_o\cdot(\frac{u}{2}+\lambda_o)\ ,\quad\quad
\frac{u}{2}\cdot(\frac{u}{2}+\lambda_o)\in \frac{1}{2}+\ZZ\ ,$$ we
conclude that ${\Lambda^{(i)}}^*\cap(\Lambda_e^*\setminus
\Lambda^{(i)})$ is empty.
\end{proof}

\noindent In particular, when $d=16$, $\Lambda^{(1)}$ and $\Lambda^{(2)}$ must
be isomorphic to either $D_{16}^+$ or $E_8^2$. The even lattices
corresponding to each $\Lambda_k$ can be found by considering its
set of vectors of norm $2$, which is the root system of the Lie
algebras $\g_k$ (see table \ref{t:lattices}). This root system must
be contained in $\Lambda^{(i)}_k$ and in most cases, namely for
$k>0$, there is only one even unimodular lattice
satisfying this constraint, so that
$\Lambda^{(1)}_k\cong\Lambda^{(2)}_k$. The only
exception is $\Lambda_0$, because $\g_0\cong D_8\oplus D_8$ can be
embedded both in $E_8\oplus E_8$ or in $D_{16}$. In this case, a
more detailed analysis of the root systems shows that
$\Lambda^{(1)}=\Lambda_{D_{16}^+}$ and
$\Lambda^{(2)}=\Lambda_{E_8^2}$.

\begin{proposition}\label{p:seriessum}
Let $\Lambda$ be a $d$-dimensional unimodular lattice, with $d\equiv 0\bmod 8$,
$\Lambda^{(1)}$ and $\Lambda^{(2)}$ defined as in \eqref{la1la2} and
$\Theta^{(g)}_\Lambda[\delta]$ the theta series \eqref{thserexpr}
for every (even or odd) theta characteristic $\delta$. Then
$$ \sum_{\delta\text{ even}}\Theta^{(g)}_\Lambda[\delta]=2^{g-1}(\Theta^{(g)}_{\Lambda^{(1)}}+\Theta^{(g)}_{\Lambda^{(2)}})\
, $$ and
$$ \sum_{\delta\text{
odd}}\Theta_\Lambda^{(g)}[\delta]=2^{g-1}(\Theta^{(g)}_{\Lambda^{(1)}}-\Theta^{(g)}_{\Lambda^{(2)}})
\ .
$$
\end{proposition}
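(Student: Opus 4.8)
The plan is to compute the two ``total'' sums $\sum_{\text{all }\delta}\Theta^{(g)}_\Lambda[\delta]$ and $\sum_{\text{all }\delta}(-1)^{\delta'\cdot\delta''}\Theta^{(g)}_\Lambda[\delta]$ over all $2^{2g}$ characteristics $\delta=\thspin{\delta'}{\delta''}$, $\delta',\delta''\in\FF_2^g$, and then to recover the even and odd sums through the projectors $\sum_{\delta\text{ even}}=\frac12\bigl(\sum_{\text{all }\delta}+\sum_{\text{all }\delta}(-1)^{\delta'\cdot\delta''}\bigr)$ and $\sum_{\delta\text{ odd}}=\frac12\bigl(\sum_{\text{all }\delta}-\sum_{\text{all }\delta}(-1)^{\delta'\cdot\delta''}\bigr)$, which hold because $\delta$ is even or odd according to whether $\delta'\cdot\delta''$ is even or odd.

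For the first sum I would start from the expression \eqref{thserexpr}, abbreviate $\mu_i:=\lambda_i+\frac{\delta_i'u}2$, and carry out the sum over $\delta''\in\FF_2^g$ first. The only $\delta''$-dependence sits in the phase $\exp\bigl(\pi i\sum_i\delta_i''(\mu_i\cdot u)\bigr)=\prod_i(-1)^{\delta_i''(\mu_i\cdot u)}$. Here I would reduce each exponent modulo $2$: since $u\cdot u\equiv 0\bmod 8$ one has $\mu_i\cdot u\equiv\lambda_i\cdot u\bmod 2$, and the parity-vector property \eqref{parityvec} gives $\lambda_i\cdot u\equiv\lambda_i\cdot\lambda_i\bmod 2$. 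Thus $\sum_{\delta_i''\in\{0,1\}}(-1)^{\delta_i''(\lambda_i\cdot\lambda_i)}$ equals $2$ when $\lambda_i\cdot\lambda_i$ is even and $0$ otherwise, so the $\delta''$-sum produces a factor $2^g$ together with the constraint $\lambda_i\in\Lambda_e$ for every $i$. Summing now over $\delta'\in\FF_2^g$, each $\mu_i$ runs over $\Lambda_e$ when $\delta_i'=0$ and over $\frac u2+\Lambda_e$ when $\delta_i'=1$, hence over all of $\Lambda^{(1)}=\Lambda_e\cup(\frac u2+\Lambda_e)$; since the surviving phase is exactly $\exp(\pi i\sum_{ij}\mu_i\cdot\mu_j\Omega_{ij})$, this reassembles the theta series \eqref{thetaseries} and yields $\sum_{\text{all }\delta}\Theta^{(g)}_\Lambda[\delta]=2^g\,\Theta^{(g)}_{\Lambda^{(1)}}$.

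The second sum is handled identically, the only change being the extra factor $(-1)^{\delta'\cdot\delta''}=\prod_i(-1)^{\delta_i'\delta_i''}$. This shifts the inner sum to $\sum_{\delta_i''}(-1)^{\delta_i''(\delta_i'+\lambda_i\cdot\lambda_i)}$, which is nonzero precisely when $\lambda_i\cdot\lambda_i\equiv\delta_i'\bmod 2$. Consequently, after summing over $\delta'$, each $\mu_i$ now ranges over $\Lambda_e$ for $\delta_i'=0$ and over $\frac u2+\Lambda_o$ for $\delta_i'=1$, i.e. over $\Lambda^{(2)}=\Lambda_e\cup(\frac u2+\Lambda_o)$, giving $\sum_{\text{all }\delta}(-1)^{\delta'\cdot\delta''}\Theta^{(g)}_\Lambda[\delta]=2^g\,\Theta^{(g)}_{\Lambda^{(2)}}$. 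Substituting both totals into the two projectors immediately produces $\sum_{\delta\text{ even}}\Theta^{(g)}_\Lambda[\delta]=2^{g-1}(\Theta^{(g)}_{\Lambda^{(1)}}+\Theta^{(g)}_{\Lambda^{(2)}})$ and $\sum_{\delta\text{ odd}}\Theta^{(g)}_\Lambda[\delta]=2^{g-1}(\Theta^{(g)}_{\Lambda^{(1)}}-\Theta^{(g)}_{\Lambda^{(2)}})$.

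The argument is essentially bookkeeping, so the one place demanding care---the main obstacle---is the mod-$2$ reduction of the phase in the $\delta''$-sum: one must use both $u\cdot u\equiv0\bmod 8$ (to discard the $\frac{\delta_i'}2\,u\cdot u$ contribution to $\mu_i\cdot u$, which is a multiple of $4$ and hence even) and the parity-vector identity \eqref{parityvec} to convert $\lambda_i\cdot u$ into $\lambda_i\cdot\lambda_i$. With these two facts in hand the even/odd norm dichotomy that distinguishes $\Lambda^{(1)}$ from $\Lambda^{(2)}$ falls out automatically, and nothing further is required.
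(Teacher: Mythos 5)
Your proof is correct, but it is organised quite differently from the one in the paper. The paper proves both identities by induction on the number $k$ of components of the characteristic, via the partial sums $R^{(g)}_k[\delta]$ and the intermediate statement \eqref{tobeproved}; the induction is needed there because parity does not factorise over components (an even $\delta\in\FF_2^{2k}$ may restrict to an odd $\hat\delta\in\FF_2^{2(k-1)}$), so the even and odd sums must be carried along together through the recursion. You sidestep the induction entirely by writing the even/odd indicator as $\tfrac12\bigl(1\pm(-1)^{\delta'\cdot\delta''}\bigr)$ and computing the two unrestricted sums $\sum_{\delta}\Theta^{(g)}_\Lambda[\delta]$ and $\sum_{\delta}(-1)^{\delta'\cdot\delta''}\Theta^{(g)}_\Lambda[\delta]$, each of which \emph{does} factorise as a product over $i=1,\dots,g$ of elementary two-term sums over $(\delta_i',\delta_i'')$. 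The arithmetic input is identical in the two arguments: $u\cdot u\equiv 0\bmod 8$ to discard the $\tfrac{\delta_i'}{2}\,u\cdot u$ contribution, and the parity-vector property \eqref{parityvec} to trade $\lambda_i\cdot u$ for $\lambda_i\cdot\lambda_i$ modulo $2$; both proofs also use tacitly that the unions in \eqref{la1la2} are disjoint, i.e.\ that $u/2\notin\Lambda$, which holds because $\Lambda$ is odd (as assumed in the paragraph preceding \eqref{la1la2}). Your version is shorter and makes transparent why precisely $\Lambda^{(1)}$ and $\Lambda^{(2)}$ appear --- they are the supports selected by the trivial character and by $(-1)^{\delta'\cdot\delta''}$ respectively --- while the paper's inductive form additionally records the four single-component identities used in its induction step, which mirror how characteristics extend under degeneration. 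Nothing essential is missing from your argument.
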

\begin{proof} For any $0\le k\le g$,
$\lambda_{k+1},\ldots,\lambda_g\in\RR^{d}$, $\Omega\in\Sieg_g$ and
$2k$-dimensional theta characteristic $[\delta]=\thspin{\delta'}{\delta''}$,
$\delta',\delta''\in\FF_2^k$, let us define
$$
R^{(g)}_k[\delta](\lambda_{k+1},\ldots,\lambda_g,\Omega):=\sum_{\substack{\lambda_1,\ldots,\lambda_k\\
\lambda_i\in\Lambda+\delta_i'\frac{u}{2}}}(-1)^{\sum_{i=1}^k\delta_i''\lambda_i\cdot
u}\,e^{\pi i \sum_{i,j}^g\lambda_i\cdot\lambda_j \Omega_{ij}}\ ,\qquad
k>0\ ,$$ and $R^{(g)}_0(\lambda_1,\ldots,\lambda_g,\Omega):=e^{\pi
i \sum_{i,j}^g\lambda_i\cdot\lambda_j \Omega_{ij}}$. We will prove
that, for all $1\le k\le g$
\be\label{tobeproved}\begin{split} &\sum_{\delta\in\FF_2^{2k}\text{ even}}R^{(g)}_k[\delta]
=2^{k-1}\Bigl(\sum_{\lambda_1,\ldots,\lambda_k\in\Lambda^{(1)}}
R^{(g)}_0+
\sum_{\lambda_1,\ldots,\lambda_k\in\Lambda^{(2)}}R^{(g)}_0\Bigr)\
,\\ &\sum_{\delta\in\FF_2^{2k}\text{ odd}}R^{(g)}_k[\delta]
=2^{k-1}\Bigl(\sum_{\lambda_1,\ldots,\lambda_k\in\Lambda^{(1)}}
R^{(g)}_0-
\sum_{\lambda_1,\ldots,\lambda_k\in\Lambda^{(2)}}R^{(g)}_0\Bigr)\ .
\end{split}\ee  The proposition corresponds to the particular
case $k=g$. For all $1\le k\le g$ and
$[\hat\delta]=\thspin{\hat\delta'}{\hat\delta''}\in\FF_2^{2(k-1)}$,
we have
 \begin{align*}
R^{(g)}_k\thspin{\hat\delta'&0}{\hat\delta''&0}=&\sum_{\lambda_k\in\Lambda_e}R^{(g)}_{k-1}[\hat\delta]
+\sum_{\lambda_k\in\Lambda_o}R^{(g)}_{k-1}[\hat\delta]\ ,\\
R^{(g)}_k\thspin{\hat\delta'&0}{\hat\delta''&1}=&\sum_{\lambda_k\in\Lambda_e}R^{(g)}_{k-1}[\hat\delta]
-\sum_{\lambda_k\in\Lambda_o}R^{(g)}_{k-1}[\hat\delta]\ ,\\
R^{(g)}_k\thspin{\hat\delta'&1}{\hat\delta''&0}=&\sum_{\lambda_k\in\Lambda_e+\frac{u}{2}}R^{(g)}_{k-1}[\hat\delta]
+\sum_{\lambda_k\in\Lambda_o+\frac{u}{2}}R^{(g)}_{k-1}[\hat\delta]\
,
\\R^{(g)}_k\thspin{\hat\delta'&1}{\hat\delta''&1}=&\sum_{\lambda_k\in\Lambda_e+\frac{u}{2}}R^{(g)}_{k-1}[\hat\delta]
-\sum_{\lambda_k\in\Lambda_o+\frac{u}{2}}R^{(g)}_{k-1}[\hat\delta]\
,
\end{align*}
so that $$
R^{(g)}_k\thspin{\hat\delta'&0}{\hat\delta''&0}+R^{(g)}_k\thspin{\hat\delta'&0}{\hat\delta''&1}
+R^{(g)}_k\thspin{\hat\delta'&1}{\hat\delta''&0}=\sum_{\lambda_k\in\Lambda^{(1)}}R^{(g)}_{k-1}[\hat\delta]
+\sum_{\lambda_k\in\Lambda^{(2)}}R^{(g)}_{k-1}[\hat\delta]\ ,$$ and
$$ R^{(g)}_k\thspin{\hat\delta'&1}{\hat\delta''&1}=\sum_{\lambda_k\in\Lambda^{(1)}}R^{(g)}_{k-1}[\hat\delta]
-\sum_{\lambda_k\in\Lambda^{(2)}}R^{(g)}_{k-1}[\hat\delta]\ .$$
{}From these formulas, eq.\eqref{tobeproved} for $k=1$ follows
immediately. Now, suppose that eq.\eqref{tobeproved} holds for
$k-1$. Then
\begin{equation*}\begin{split} \sum_{\delta\in\FF_2^{2k}\text{ even}}&R^{(g)}_k[\delta]=\sum_{\hat\delta\in\FF_2^{2(k-1)}\text{
even}}\Bigl(R^{(g)}_k\thspin{\hat\delta'&0}{\hat\delta''&0}+R^{(g)}_k\thspin{\hat\delta'&0}{\hat\delta''&1}
+R^{(g)}_k\thspin{\hat\delta'&1}{\hat\delta''&0}\Bigr)+\sum_{\hat\delta\in\FF_2^{2(k-1)}\text{
odd}}R^{(g)}_k\thspin{\hat\delta'&1}{\hat\delta''&1}\\
=&2^{k-2}\Biggl(\sum_{\substack{\lambda_k\in\Lambda^{(1)}\\
\lambda_1,\ldots,\lambda_{k-1}\in\Lambda^{(1)}}}\!\!\!\!\!\!\!R^{(g)}_0+
\sum_{\substack{\lambda_k\in\Lambda^{(1)}\\
\lambda_1,\ldots,\lambda_{k-1}\in\Lambda^{(2)}}}\!\!\!\!\!\!\!R^{(g)}_0+
\sum_{\substack{\lambda_k\in\Lambda^{(2)}\\
\lambda_1,\ldots,\lambda_{k-1}\in\Lambda^{(1)}}}\!\!\!\!\!\!\!R^{(g)}_0+
\sum_{\substack{\lambda_k\in\Lambda^{(2)}\\
\lambda_1,\ldots,\lambda_{k-1}\in\Lambda^{(2)}}}\!\!\!\!\!\!\!R^{(g)}_0\\
&+\sum_{\substack{\lambda_k\in\Lambda^{(1)}\\
\lambda_1,\ldots,\lambda_{k-1}\in\Lambda^{(1)}}}\!\!\!\!\!\!\!R^{(g)}_0-
\sum_{\substack{\lambda_k\in\Lambda^{(1)}\\
\lambda_1,\ldots,\lambda_{k-1}\in\Lambda^{(2)}}}\!\!\!\!\!\!\!R^{(g)}_0-
\sum_{\substack{\lambda_k\in\Lambda^{(2)}\\
\lambda_1,\ldots,\lambda_{k-1}\in\Lambda^{(1)}}}\!\!\!\!\!\!\!R^{(g)}_0+
\sum_{\substack{\lambda_k\in\Lambda^{(2)}\\
\lambda_1,\ldots,\lambda_{k-1}\in\Lambda^{(2)}}}\!\!\!\!\!\!\!R^{(g)}_0\Biggr)\\
=&2^{k-1}\Bigl(\sum_{\lambda_1,\ldots,\lambda_{k}\in\Lambda^{(1)}}\!\!\!\!\!\!\!R^{(g)}_0
+\sum_{\lambda_1,\ldots,\lambda_{k}\in\Lambda^{(2)}}\!\!\!\!\!\!\!R^{(g)}_0\Bigr)\ .
\end{split}\end{equation*} An analogous computation gives the case with odd spin
structures. \end{proof}

\begin{corollary}
For the lattices $\Lambda_k$, $k=1,\ldots,5$ in table
\ref{t:lattices}, $\Lambda^{(1)}\cong\Lambda^{(2)}$.
\end{corollary}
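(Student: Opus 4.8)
The plan is to use the fact, established in the preceding proposition, that each of $\Lambda^{(1)}$ and $\Lambda^{(2)}$ is a $16$-dimensional even unimodular lattice, and is therefore isomorphic to either $D_{16}^+$ or $E_8^2$, these being the only two even unimodular lattices in dimension $16$. The root systems (the sets of norm-$2$ vectors) of these two candidates are $D_{16}$ and $E_8\oplus E_8$ respectively, and these are non-isomorphic. Hence it suffices to show that, for each $k=1,\ldots,5$, the root system $\g_k$ of $\Lambda_k$ can be a sub-root-system of exactly one of $D_{16}$ and $E_8\oplus E_8$: both $\Lambda^{(1)}$ and $\Lambda^{(2)}$ are then forced to be the corresponding even lattice, giving $\Lambda^{(1)}\cong\Lambda^{(2)}$.

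First I would observe that every norm-$2$ vector of $\Lambda_k$ has even norm, hence lies in $\Lambda_e$ and therefore in both $\Lambda^{(1)}$ and $\Lambda^{(2)}$ by \eqref{la1la2}. Consequently $\g_k$ is realised as a sub-root-system of the root system of each $\Lambda^{(i)}$, which is either $D_{16}$ or $E_8\oplus E_8$. Note that only the \emph{exclusion} direction is actually needed: since the lattices $\Lambda^{(i)}$ are already known to exist and to contain $\g_k$, it is enough to rule out one of the two candidates by showing that $\g_k$ \emph{cannot} be a sub-root-system of its root system, and both $\Lambda^{(i)}$ must then equal the other one.

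The two exclusion arguments are read off from the component types of $\g_k$ in Table \ref{t:lattices}. For $k=1,3,5$ the root system $\g_k$ has an irreducible component of rank greater than $8$, namely $A_{15}$, $D_{12}$ and $D_{16}$ (of ranks $15,12,16$). Since the roots of $E_8\oplus E_8$ split into two mutually orthogonal copies of the $E_8$ root system, any irreducible sub-root-system has connected Dynkin diagram and so must lie entirely in one factor, of rank at most $8$; hence none of these $\g_k$ embeds in $E_8\oplus E_8$, forcing both $\Lambda^{(i)}\cong D_{16}^+$. For $k=2,4$ the root system $\g_k$ contains an exceptional component $E_7$ (resp. $E_8$). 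Here I would invoke the classical classification of root subsystems of $D_n$, which are direct sums of components of type $A$ and $D$ only; in particular no exceptional system occurs inside $D_{16}$, so neither of these $\g_k$ embeds in $D_{16}$, forcing both $\Lambda^{(i)}\cong E_8^2$.

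The main obstacle is precisely the exclusion for $k=2,4$, i.e. showing that the $E_7$ and $E_8$ root systems are not sub-root-systems of $D_{16}$. For $E_8$ this admits a clean self-contained argument: an embedding of the $E_8$ root system into $D_{16}\subset\ZZ^{16}$ would realise the \emph{unimodular} lattice $E_8$ as a sublattice of $\ZZ^{16}$, and a unimodular sublattice of an integral lattice is always an orthogonal direct summand, forcing $\ZZ^{16}\cong E_8\oplus M$ with $M$ a rank-$8$ unimodular lattice; but then $M\cong\ZZ^8$ or $M\cong E_8$, and neither $E_8\oplus\ZZ^8$ (which has $16$ vectors of norm $1$) nor $E_8\oplus E_8$ (which is even) is isomorphic to $\ZZ^{16}$ (which has $32$ vectors of norm $1$ and is odd), a contradiction. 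For $E_7$ the root lattice is not unimodular, so this trick does not apply directly and I would instead rely on the classification quoted above, equivalently on the Borel--de Siebenthal analysis of the extended Dynkin diagram $\widetilde{D}_n$, whose maximal sub-root-systems are of type $A$ and $D$, iterated down. Everything else is a direct reading of the component types in Table \ref{t:lattices}.
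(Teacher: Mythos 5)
Your proof is correct, but it follows a genuinely different route from the one the paper uses to prove this corollary. The paper's proof is analytic: for $k=1,\dots,5$ one has $n_k>0$, so $\Theta_k[\delta]=\theta[\delta]^{n_k}\Theta_{\tilde\Lambda_k}[\delta]$ vanishes for every \emph{odd} $\delta$ (since $\theta[\delta](0)=0$); feeding this into the odd-spin-structure sum of Proposition \ref{p:seriessum} gives $\Theta^{(g)}_{\Lambda^{(1)}}=\Theta^{(g)}_{\Lambda^{(2)}}$ for all $g$, and since $E_8^2$ and $D_{16}^+$ have distinct theta series at $g=4$ (their difference is the Schottky form), the two lattices must be isomorphic. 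Your argument is instead the combinatorial one that the paper only sketches informally in the paragraph \emph{preceding} the corollary: the norm-$2$ vectors of $\Lambda_k$ lie in $\Lambda_e\subset\Lambda^{(i)}$, so $\g_k$ embeds as a root subsystem of $D_{16}$ or $E_8\oplus E_8$, and for each $k=1,\dots,5$ exactly one candidate survives (rank-$>8$ irreducible components exclude $E_8\oplus E_8$; exceptional components exclude $D_{16}$, with your unimodular-summand trick for $E_8$ and Borel--de Siebenthal for $E_7$). Your route yields strictly more information -- it identifies \emph{which} even lattice $\Lambda^{(i)}$ is in each case, which is what populates the last two columns of Table \ref{t:lattices} -- but it needs the case analysis plus the classification of root subsystems of $D_n$; the paper's route is uniform, case-free, and makes transparent why $k=0$ is the one exception (there $n_0=0$, so the odd theta series need not vanish, and indeed $\Lambda_0^{(1)}\cong D_{16}^+\not\cong E_8^2\cong\Lambda_0^{(2)}$, consistent with your observation that $D_8\oplus D_8$ embeds in both root systems). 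Both arguments rely on Witt's classification of $16$-dimensional even unimodular lattices, which you correctly invoke.
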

\begin{proof}
Notice that for these lattices
$$
\Theta_k[\delta]=\theta[\delta]^{n_k}\Theta_{\tilde\Lambda_k}[\delta]\
, $$ with $n_k>0$. It follows that $\Theta_k[\delta]=0$ if $\delta$
is odd. By proposition \ref{p:seriessum}, this implies
$$ \Theta^{(g)}_{\Lambda^{(1)}}=\Theta^{(g)}_{\Lambda^{(2)}}\ ,
$$ for all $g$ and, since $E_8^2$ and $D_{16}^+$ have
different theta series at $g=4$, one gets
$\Lambda^{(1)}\cong\Lambda^{(2)}$.
\end{proof}

As an application, we can use this result to compute the constant
$C$ in \eqref{Ccost}. By summing both sides of \eqref{reltheta4} over all even
spin structures, we obtain
$$
2^3(c_0^5+2c^5_2+2c^5_4)\Theta_{E_8^2}+2^3(c_0^5+2c^5_1+2c^5_3+2c^5_5)\Theta_{D_{16}^+}=2^3(2^4+1)C(\Theta_{E_8^2}-\Theta_{D_{16}^+})\
, $$ so that
$$ C=\frac{c_0^5+2c^5_2+2c^5_4}{17}=-\frac{c_0^5+2c^5_1+2c^5_3+2c^5_5}{17}=-\frac{2^5\cdot 3}{7}\ .
$$
An analogous calculation gives the constants $B_4$ and $B_5$.

\end{document}